\newtheorem{remark}{Remark}
\newtheorem{theorem}{Theorem}
\newtheorem{lemma}{Lemma}
\newtheorem{corollary}{Corollary}
\def\ScaleIfNeeded{%
\ifdim\Gin@nat@width>\linewidth \linewidth \else \Gin@nat@width
\fi } \makeatother
\begin{document}
\title{Performance Analysis and Power Allocation of Joint Communication and Sensing Towards Future Communication Networks}
\author{
Meng Liu,~\IEEEmembership{Student Member,~IEEE,}
Minglei Yang,~\IEEEmembership{Member,~IEEE,}
        Huifang Li,
        Kun Zeng,~\IEEEmembership{Member,~IEEE,}
        Zhaoming Zhang,~\IEEEmembership{Student Member,~IEEE,}
        Xiancheng Cheng,~\IEEEmembership{Student Member,~IEEE,}
        Arumugam Nallanathan,~\IEEEmembership{Fellow,~IEEE,}\\
        Derrick Wing Kwan Ng,~\IEEEmembership{Fellow,~IEEE,}
        Guangjian Wang,~\IEEEmembership{Member,~IEEE}
\thanks{M. Liu, M. Yang, Z. Zhang, and X. Cheng are with the National Laboratory of Radar Signal Processing, Xidian University, Xi'an, Shaanxi 710071, China (e-mail: liumeng2021@163.com; mlyang@xidian.edu.cn; zmzhang\_sx@163.com; xcheng@syr.edu).}
\thanks{H. Li is with State Key Laboratory of Integrated Services Networks, Xidian University, Xi'an, Shaanxi 710071, China (e-mail:huifanglixd@gmail.com)}
\thanks{K. Zeng and G. Wang are with Huawei Technologies Co., Ltd, Chengdu, Sichuan 611731 China (e-mail: kun.zeng@huawei.com; wangguangjian@huawei.com)}
\thanks{A. Nallanathan is with the School of Electronic Engineering and Computer Science, Queen Mary University of London, London E1 4NS, U.K. (e-mail: a.nallanathan@qmul.ac.uk).}
\thanks{D. W. K. Ng is with the School of Electrical Engineering and Telecommunications, University of New South Wales, Kensington, NSW 2052, Australia (e-mail: w.k.ng@unsw.edu.au).}
}
\maketitle
\vspace{-20 mm}
\begin{abstract}
\vspace{-5 mm}
To mitigate the radar and communication frequency overlapping caused by massive devices access, we propose a novel joint communication and sensing (JCS) system in this paper, where a micro base station (MiBS) can realize target sensing and cooperative communication simultaneously. Concretely, the MiBS, as the sensing equipment, can also serve as a full-duplex (FD) decode-and-forward (DF) relay to assist the end-to-end communication. To further improve the spectrum utilization, non-orthogonal multiple access (NOMA) is adopted such that the communication between the macro base station (MaBS) and the Internet-of-Things (IoT) devices. To facilitate the performance evaluation, the exact and asymptotic outage probabilities, ergodic rates, sensing probability of the system are characterized. Subsequently, two optimal power allocation (OPA) problems of maximizing the received signal-to-interference-plus-noise ratio of sensing signal and maximizing the sum rate for communication are designed that are solved by means of the Lagrangian method and function monotonicity. The simulation results demonstrate that: 1) the proposed JCS NOMA system can accomplish both communication enhancement and sensing function under the premise of the same power consumption as non-cooperative NOMA; 2) the proposed OPA schemes manifest superiorities over a random power allocation scheme.
\end{abstract}
\vspace{-5 mm}
\begin{IEEEkeywords}
\vspace{-5 mm}
Joint communication and sensing, cooperative communication, non-orthogonal multiple access, full-duplex.
\end{IEEEkeywords}
\vspace{-6 mm}
\section{Introduction}
\vspace{-4 mm}
\subsection{Background}
\vspace{-2 mm}
\IEEEPARstart{W}{ith} the development of the Internet-of-Things (IoT), a large number of devices are connected to mobile communication networks explosively, which leads to the proximity of communication and radar frequency bands directly \cite{Liuf20,Hongh21,Hassanien19}. In order to alleviate the spectrum congestion, various methods have been proposed, among which joint communication and radar (JCR) \cite{4} and joint communication and sensing (JCS) \cite{Wild21} are considered to be more feasible. In the past few decades, two feasible designs based on JCR systems have been actively explored, one is to design a shared waveform suitable for the simultaneous operation of radar and communication, the other is to design a pragmatic scheme that can enable the co-existence of radar and communication \cite{wang20c}. Nevertheless, the shared waveform design for JCR systems requires advanced radio frequency (RF) port processing, which increases the potential hardware cost, while the co-existence scheme would cause cross interference between the radar and communication operations \cite{Zhengle19}.

With the completion of the fifth-generation (5G) new radio (NR) deployment, the reliability, delay, and user experience have reached unprecedented accomplishments. Recently, looking forward to the future multi-functional network application scenarios such as smart city, smart transportation, and smart environment, applying mobile communications to data transmission only has been shown insufficient to cater to the requirements \cite{chenna21}. A prevalent prospect has been mapped out in the industry and academia to realize JCS \cite{Pin20}. Indeed, heterogeneous wireless applications are omnipresent and the IoT has been widely used in various domains, which lays a solid foundation for the implementation of JCS \cite{Chettri20}. Specifically, the entire network is used for smart transportation, smart home environment, smart healthcare, airport clearance of sensitive areas \cite{Saad20,Chowdhury20,Tataria21}. Indeed, the emergency needs have fueled the fusion between communication and sensing and transform the IoT to the Internet-of-Intelligent-Things (IoIT) \cite{Tariq20}. For instance, the authors in \cite{Zhangqx20} proposed a JCS system, which could reduce the delay of information sharing among vehicles and improve the date rate, while guaranteeing the safety of autonomous driving. Besides, in \cite{Chenx20}, a JCS system of unmanned aerial vehicle (UAV) based on the beam sharing was studied, in which UAVs could accomplish the functions of JCS. All the mentioned articles show that JCS is an effective way that can provide convenience for our lives.

The implementation of JCS requires massive end-to-end data transmission. However, it is tricky to realize information transmission with limited spectral resources. To address this issue, non-orthogonal multiple access (NOMA), as a core technology to resolve the potential of the future communication networks, can deliver significant improvements in delay, energy efficiency, supported numbers of users, and throughput in contrast to the conventional orthogonal multiple access (OMA). Specifically, OMA can only serve a single user by exploiting the resources of time/frequency/code domain orthogonally, while NOMA can provide a balanced service comparatively for serving multiple devices simultaneously while guaranteeing the required quality-of-service (QoS) of devices \cite{liuy17,17,you21}. The principle of NOMA is to allow controlled multiple access interference and adopts the successive interference cancellation (SIC) to decode the desired signals at the receiver \cite{dingzhb}. For instance, in \cite{Sun20}, the traffic offloading scheme of a device-to-device (D2D) network was investigated by using NOMA and the resources were optimized such that the system capacity was maximized. Also, the multi-constraint optimization algorithm based on a neoteric gravitational search was proposed in \cite{huifang21} to maximize the throughput in wireless cached NOMA UAV systems. The aforementioned works manifest that NOMA yields significant performance gains over OMA in various communication scenarios, serving as a promising candidate for enabling future mobile communication systems.

In order to further enhance the system robustness and to enlarge coverage area, cooperative relaying has been introduced into NOMA systems \cite{24}. For example, the authors employed the large intelligent reflecting surface as a relay to assist the end-to-end transmission of the NOMA system in \cite{Bariah21}, and the pairwise error probability expressions of users were derived to evaluate the system performance. Furthermore, in \cite{Xingwang20}, the outage probabilities and ergodic rates of users were analyzed in the cooperative and non-cooperative NOMA scenarios, which demonstrated that cooperative NOMA could enhance the system performance. In addition, a novel cooperative NOMA system was explored in \cite{zhang17}, where the near user acted as a relay to assist a far user to communicate with the source and the closed-form expressions of the optimal power allocation (OPA) were obtained to maximize the sum rate with low computational complexity.

Different from the half-duplex (HD) relaying adopted in \cite{24,Bariah21,Xingwang20,zhang17}, which needs two orthogonal time slots for signal transmission, the full-duplex (FD) relaying requires only one time slot that further improves the system spectral efficiency. Nevertheless, loop self-interference (LSI) is unavoidable in the FD systems. To this end, the authors in \cite{Riihonen11} adopted natural isolation, time-domain elimination, and spatial suppression such that the effects of LSI could be reduced in the FD systems. Furthermore, the authors in \cite{8666065} studied the FD-NOMA vehicle-to-everything system and demonstrated that FD could achieve a positive improvement in latency, while LSI caused a significant negative impact. Moreover, in a multiple downlink and uplink users FD-NOMA system \cite{Sunya17}, the authors unveiled that FD mode could improve the system throughput and enable a fairer resource allocation compared with HD mode.

\vspace{-5 mm}
\subsection{Motivation and Contribution}
\vspace{-2 mm}
Previous works laid solid theoretical foundations for establishing JCS systems. However, the performance gain of the dual-function JCS system has not been fully exploited yet in practice. In order to break through the dilemma, there are serval attempts in the literature for investigating JCR systems such that the communication and sensing can be realized respectively. In \cite{Chiriyath16}, to synchronize communication and sensing, the authors considered that radar can fulfill two functions simultaneously, that is, target tracking and cooperative communication. In particular, the inner bound of SIC and the Fisher information were discussed to evaluate the system performance. Ulteriorly, the estimation rate of radar and the performance bounds of the radar communication coexistence system were further developed to verify the system performance in \cite{Chiriyath17}. While the detailed features of the relay was not discussed exhaustively in \cite{Chiriyath16} and \cite{Chiriyath17}. The structures of single-carrier and multi-carrier JCR systems were devised in \cite{Wangfan19} and \cite{wangfa19} respectively, and the OPA schemes were analyzed to maximize the communication sum rate or the radar received signal-to-interference-plus-noise ratio (SINR). Moreover, in \cite{Mehmet21}, a downlink NOMA transceiver was designed from the perspective of radar and communication waveforms coexistence. The advancement of the proposed scheme was demonstrated in terms of radar ambiguity and bit-error rate. Also, it was pointed out that waveform coexistence could be applied to the JCS systems. However, in these papers, the authors used radar for sensing, which is difficult to implement. In practice, radar and communication are built independently with separated operating systems, and the aliasing of radar and communication signals generates obstacles to the signal processors. Moreover, the likelihood that radar provides additional assistance for communications is faint, especially in the military field.

As a remedy, the authors in \cite{xuwen18} studied a JCS cognitive OFDM-NOMA system in which the cognitive source sensed the state of the primary user to access the primary network frequency band opportunistically. In particular, an OPA algorithm was designed to maximize the system capacity under the premise of tolerable interference caused by the cognitive source. Also, in the cognitive Industrial IoT system \cite{liuxin20tii}, multiple sensing stations worked together and transmitted the sensing information to the data center by non-orthogonal communication to improve the data transmission efficiency. Besides, the JCS NOMA network framework was proposed recently in \cite{Qiaoi21}, in which the transmitted and received beamforming were designed for minimizing the computational error and maximizing weighted sum rate. Indeed, \cite{Mehmet21,xuwen18,liuxin20tii,Qiaoi21} provided embryonic forms for the development of JCS NOMA systems, whereas the JCS cooperative scheme was not considered, and the performance analysis of the system was lacked as well. From a practical point of view, the integration of JCS and NOMA can increase spectrum utilization enormously, improve communication reliability, and ensure resource allocation fairness. Therefore, the analysis and optimization of JCS NOMA networks are of tremendous significance.

Against the background, in this paper, we consider a JCS NOMA system where a micro base station (MiBS) can realize the functions of the cooperative relaying and target sensing, simultaneously. The contributions of this paper are summarized as follows:
\begin{itemize}
\item For the proposed JCS FD-NOMA system, the exact and asymptotic expressions of the outage probability for the IoT devices are obtained as well as the diversity orders in the high signal-to-noise ratio (SNR) regime. Furthermore, the ergodic rates of the IoT devices are also investigated. Due to the particularity of the expressions, we wield the integral and approximate expressions instead. The results indicate that on the basis of realizing the extra sensing function, the ergodic rate of the proposed JCS FD-NOMA system outperforms the traditional non-cooperative NOMA system without requiring additional power consumption.
\item The sensing probability of the MiBS is derived to evaluate the sensing performance of the JCS FD-NOMA system. The obtained results show that the MiBS sensing behavior is sensitive to the transmitted power of the MaBS.
\item To further improve the performance of the considered system, we formulate two OPA problems: \emph{i)} sensing-centric design (SCD), i.e., maximizing the received SINR of the sensing signal at the MiBS, while the IoT devices satisfy the required QoS constraint; \emph{ii)} communication-centric design (CCD), i.e., maximizing the sum rate of the IoT devices, while the MiBS meets the minimum SINR requirement. Finally, we derive the closed-form expressions of the OPA coefficients analytically to improve the performance of the considered system.
\end{itemize}

\vspace{-5 mm}
\subsection{Organization and Notations}
\vspace{-2 mm}

The remainder of this paper is organized as follows. Section II discusses the system model of the JCS FD-NOMA system. In Section III, the performance of the considered system is evaluated by deriving the exact and asymptotic outage probabilities and the corresponding diversity orders. The ergodic rates of the IoT devices are derived in Section IV. Section V analyzes the sensing performance of the system. Two OPA problems are formulated in Section VI. The numerical results are provided to demonstrate the correctness of our theoretical analysis in Section VII before the conclusion in Section VIII.

In this paper, we use $\mathbb{E}\left[  \cdot  \right]$ to represent the expectation operation. ${\rm{{\cal C}{\cal N}}}\left( {\mu ,{\sigma ^2}} \right)$ symbolizes a complex valued Gaussian random variable with mean $\mu$ and variance ${\sigma ^2}$. $ \triangleq $ denotes the definition operation. $\text{Ei}\left(  \cdot  \right)$ is the exponential integral function. $\mathbb{C}$ represents the set of complex numbers. ${f_X}\left(  \cdot  \right)$ and ${F_X}\left(  \cdot  \right)$ are the probability density function (PDF) and cumulative distribution function (CDF) of the random variable $X$, respectively.

\vspace{-2 mm}
\section{System Model}
\vspace{-1 mm}
\begin{figure} [!t]
\vspace{-1.2 cm}  
\setlength{\abovecaptionskip}{-2 pt}  
\setlength{\belowcaptionskip}{-2 pt }
\centering
\includegraphics[width=8cm]{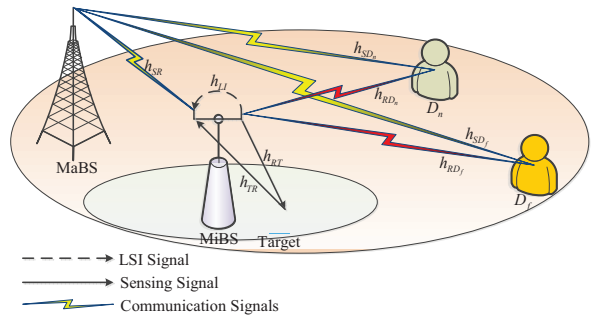}
 \caption{An illustration of the dual-function MiBS cooperative NOMA network.}
 \vspace{-0.5 cm}   
\end{figure}
We consider a JCS network as shown in Fig. 1, where the MaBS $S$ is mainly in charge of communication with IoT devices, while the MiBS $R$ is responsible for sensing surrounding unknown targets and to facilitate cooperative communication between the MaBS and two IoT devices, namely the far IoT device $D_f$ and the near IoT device $D_n$. Specifically, in order to improve the communication performance of the system, the MiBS can also act as a FD decode-and-forward (DF) relay to assist the end-to-end communication while carrying out sensing a point target $T$. We further assume that the MiBS is equipped with two antennas, one for transmitting signals and the other for receiving signals, while other equipment is equipped with a single-antenna. For the convenience of the presentation, the channel coefficients of $S \to R$, $S \to D_f$, $S\to D_n$, $R\to T$, $T\to R$, $R\to D_f$, and $R\to D_n$ are denoted by ${ h_i}$, $i \in \left\{ {SR,S{D_f},S{D_n},RT,TR,R{D_f},R{D_n}} \right\}$, respectively.\footnote{We further assume that the perfect channel state information (CSI) between the nodes can be obtained by using some existing channel training methods, e.g., \cite{lvlu19}.} It is further assumed that all the links are subject to Rayleigh fading, i.e., the channel gain ${ \rho _i} = {\left| {{ h_i}} \right|^2}$ all obeys exponential distribution.\footnote{It can be obtained that the PDF and CDF of ${ \rho _i}$ are, respectively, given by ${f_{{{ \rho }_i}}}\left( x \right) = {{\exp \left( { - {x \mathord{\left/
 {\vphantom {x {{\beta _i}}}} \right.
 \kern-\nulldelimiterspace} {{\beta _i}}}} \right)} \mathord{\left/
 {\vphantom {{\exp \left( { - {x \mathord{\left/
 {\vphantom {x {{\beta _i}}}} \right.
 \kern-\nulldelimiterspace} {{\beta _i}}}} \right)} {{\beta _i}}}} \right.
 \kern-\nulldelimiterspace} {{\beta _i}}}$ and ${F_{{{ \rho }_i}}}\left( x \right) = 1 - \exp \left( { - {x \mathord{\left/
 {\vphantom {x {{\beta _i}}}} \right.
 \kern-\nulldelimiterspace} {{\beta _i}}}} \right)$, where ${\beta _i} = {h \mathord{\left/
 {\vphantom {h {\sqrt {1 + d_i^\alpha } }}} \right.
 \kern-\nulldelimiterspace} {\sqrt {1 + d_i^\alpha } }}$ denotes the channel variance, $h \sim {\cal{C}\cal{N}}\left( {0,\Omega } \right)$ is the complex channel coefficient, $\alpha$ denotes the path loss exponent, and $d_i$ is the distance between the nodes \cite{8302918}.} Next, we will discuss the signal transmission process for the proposed cooperative NOMA scheme.

For the non-cooperative links, during the $t$-th time slot, $S$ intends to send signal ${y_c} = \sqrt {{a_n}{P_{com}}} {x_n}\left( t \right) + \sqrt {{a_f}{P_{com}}} {x_f}\left( t \right)$ to $D_f$ and $D_n$ concurrently, where $P_{com}$ denotes the transmit power of $S$, ${x_f}\left(t\right) \in \mathbb{C}$ and ${x_n}\left(t\right)\in \mathbb{C}$ are the desired signals for $D_f$ and $D_n$ with $\mathbb{E}\left[ {{{\left| {{x_f}} \right|}^2}} \right] = \mathbb{E}\left[ {{{\left| {{x_n}} \right|}^2}} \right] = 1$, $a_f$ and $a_n$ represent the power allocation coefficients such that $a_n+a_f=1$ and $a_f>a_n$, respectively. Thus, the received signals at $D_f$ and $D_n$ can be expressed as
\vspace{-1 mm}
\begin{equation}
{y_{{i_1}}} = {h_{{i_1}}}\left( {\sqrt {{a_n}{P_{com}}} {x_n}\left( t \right) + \sqrt {{a_f}{P_{com}}} {x_f}\left( t \right)} \right) + {n_{{i_1}}},
\end{equation}
\!where ${i_1} \in \left\{ {S{D_f},S{D_n}} \right\}$, ${n_{i_1}} \sim {\rm{{\cal C}{\cal N}}}\left( {0,{N_0}} \right)$ denotes the additive white Gaussian noise (AWGN) and $N_0$ is the noise power.

For the cooperative links, $R$ transmits ${y_r} = \sqrt {{P_{sen}}} {x_r}\left( t -\tau _1\right)$ to sense the target, then receives the signals from $S$ and the target echo, where ${P_{sen}}$ is the power budget of $R$, ${x_r}\in \mathbb{C}$ denotes the transmitted signal of the MaBS with $\mathbb E\left[ {{{\left| {{x_r}} \right|}^2}} \right] = 1$, and $\tau_1$ is the processing delay of $x_r$ at $R$. Therefore, the received signal at $R$ can be expressed as
\vspace{-1 mm}
\begin{equation}\label{ysr}
{y_{SR}} \!=\! {h_{SR}}\!\left(\! {\sqrt {{a_n}{P_{com}}} {x_n}\!\left( t \right) \!+ \!\sqrt {{a_f}{P_{com}}} {x_f}\!\left( t \right)} \!\right) + {h_{RR}}\sqrt {\delta {P_{sen}}} {x_r}\!\left( {t \!- \!{\tau _1}} \right) + {h_{LI}}\sqrt {\omega {P_{sen}}} {x_{LI}}\!\left( {t\! - \!{\tau _2}} \right) + {n_{SR}},
\end{equation}
\!where ${n_{SR}} \sim {\rm{{\cal C}{\cal N}}}\left( {0,{N_0}} \right)$ is the AWGN at $R$, $x_{LI}\in \mathbb{C}$ denotes the LSI signal, $h_{LI}$ is the channel coefficient of LSI channel at $R$, $\omega  \in \left\{ {0,1} \right\}$ denotes the switching operation factor of $R$,\footnote{It should be pointed that when $\omega=0$, $R$ is in half duplex (HD) mode, and when $\omega=1$, $R$ is in FD mode. } $ \tau_2 $ represents the processing delay of $x_{LI}$ at $R$ with $\tau_2<\tau_1$, and $h_{RR}$ is the new channel state coefficient which captures the joint impacts carried by $h_{RT}$ and $h_{TR}$ and $\delta  \in \left[ {0,1} \right]$\footnote{When $\delta =0$, it indicates that no echo exists in the received signal of the MiBS, and there is no target in the sensing range.} denotes the power reflection factor caused by the target \cite{Vincentl13}. In the following, we introduce the CDF and PDF of $\rho_{RR}$.
\vspace{-3 mm}
\begin{lemma}
Assuming that the target is moving slowly, we consider that the channel environment experienced by the signals of $R\to T$ and $T \to R$ remains unchanged approximately, i.e., ${\rho_{RT}} = {\rho_{TR}}$. Then the CDF and PDF of $\rho _{RR}$ are expressed as
\vspace{-1 mm}
\begin{align}\label{rhorr}
{F_{{\rho _{RR}}}}\left( x \right)& = 1 - {e^{ - \frac{{\sqrt x }}{{{\beta _{RR}}}}}},\\\label{rhorrpdf}
{f_{{\rho _{RR}}}}\left( x \right) &= \frac{1}{{2{\beta _{RR}}\sqrt x }}{e^{ - \frac{{\sqrt x }}{{{\beta _{RR}}}}}}.
\end{align}
\end{lemma}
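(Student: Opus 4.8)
The plan is to treat the round-trip sensing link $R \to T \to R$ as a cascade of two propagation paths, so that the effective power gain $\rho_{RR}$ is the product $\rho_{RT}\rho_{TR}$ of the two one-way gains. Invoking the slow-target hypothesis $\rho_{RT} = \rho_{TR}$ posited in the lemma, this product collapses to the square $\rho_{RR} = \rho_{RT}^2$. The task thus reduces to finding the distribution of the square of an exponential random variable, which I would obtain by a monotone change of variables rather than by convolving densities.

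First I would record that $\rho_{RT} = |h_{RT}|^2 \ge 0$ is exponentially distributed with scale parameter $\beta_{RR}$, so that $F_{\rho_{RT}}(y) = 1 - e^{-y/\beta_{RR}}$ for $y \ge 0$, in line with the generic link distribution given in the footnote. Since $y \mapsto y^2$ is strictly increasing on $[0,\infty)$, for every $x \ge 0$ the event $\{\rho_{RT}^2 \le x\}$ is identical to $\{\rho_{RT} \le \sqrt{x}\}$. Evaluating the exponential CDF at $\sqrt{x}$ then gives $F_{\rho_{RR}}(x) = F_{\rho_{RT}}(\sqrt{x}) = 1 - e^{-\sqrt{x}/\beta_{RR}}$, establishing \eqref{rhorr}.

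To reach \eqref{rhorrpdf}, I would differentiate the CDF with respect to $x$; the chain rule applied to $e^{-\sqrt{x}/\beta_{RR}}$ supplies the factor $1/(2\beta_{RR}\sqrt{x})$ and reproduces the stated density. Every step here is either a direct physical identification or an elementary operation on an exponential variable, so I anticipate no genuine analytical difficulty. The one point that warrants explicit care---and where I would focus the written argument---is justifying that the two-hop echo gain multiplies rather than adds, and noting that the restriction $x \ge 0$ is what renders the square root single-valued and the change of variables legitimate.
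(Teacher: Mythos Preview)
Your proposal is correct and follows essentially the same route as the paper: the paper also identifies $\rho_{RR}=\rho_{RT}\rho_{TR}$, uses the slow-target hypothesis to reduce the event $\{\rho_{RR}\le x\}$ to $\{\rho_{RT}\le\sqrt{x}\}$, evaluates the exponential CDF at $\sqrt{x}$, and then differentiates. The only cosmetic difference is that the paper writes the CDF step as $\int_0^{\sqrt{x}} f_{\rho_{RT}}(t)\,dt$ with parameter $\beta_{RT}$ and then relabels $\beta_{RT}=\beta_{RR}$, whereas you work with $\beta_{RR}$ from the outset.
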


\begin{proof}
By using the CDF of $\rho_i$, we can obtain
\begin{align}\nonumber
 {F_{{\rho _{RR}}}}\left( x \right)& = \Pr \left( {{\rho _{RR}} = {\rho _{RT}}{\rho _{TR}} \leqslant x} \right) \hfill \\\nonumber
  & = \int_0^{\sqrt x } {{f_{{\rho _{RT}}}}\left( t \right)} dt = \int_0^{\sqrt x } {{f_{{\rho _{TR}}}}\left( t \right)} dt \hfill \\\label{rhor}
   &= 1 - {e^{ - \frac{{\sqrt x }}{{{\beta _{RT}}}}}} = 1 - {e^{ - \frac{{\sqrt x }}{{{\beta _{TR}}}}}} .
\end{align}

Using the basic variable substitution, (\ref{rhorr}) can be obtained, and then taking the derivative of ${F_{{\rho _{RR}}}}\left( x \right)$ with respect to $x$, (\ref{rhorrpdf}) can be obtained.
\end{proof}

After receiving the signals from $S$ and the echo, the MiBS, serving as a FD DF relay for communication, decodes and forwards the received signals to $D_f$ and $D_n$. The received signals at the IoT devices can be expressed as\footnote{It should be emphasized that if $R$ can decode the communication signals successfully, then it will forward them to the IoT devices, otherwise $R$ remains silent and the cooperative links will be in a state of outage.}
\begin{equation}\label{yrd}
{y_{{i_2}}} = {h_{{i_2}}}\left( {\sqrt {{a_n}{P_{sen}}} {x_n}\left( {t - \tau } \right) + \sqrt {{a_f}{P_{sen}}} {x_f}\left( {t - \tau } \right)} \right) + {n_{{i_2}}},
\end{equation}
where ${i_2} \in \left\{ {R{D_f},R{D_n}} \right\}$, ${n_{{i_2}}} \sim {\rm{\cal{CN}}} \left( {0,{N_0}} \right)$ denotes the AWGN, and $\tau=\tau_1+\tau_2$.

The received SINRs of the communication signals can be divided into non-cooperative and cooperative links as well.
\vspace{-2 mm}
\subsection{Non-cooperative Links}
\vspace{-1 mm}

For the non-cooperative links, the MaBS can communicate with the IoT devices directly. According to the NOMA protocol, $x_f$ is first decoded at $D_n$, and then $x_n$ is decoded by the application of SIC. Thus, the SINRs of $D_n$ decoding $x_f$ and $x_n$ are respectively expressed as
\vspace{-2 mm}
\begin{align}\label{sinrsdnf}
\gamma _{S{D_n}}^{{D_f}} &= \frac{{{a_f}{\rho _{S{D_n}}}{\gamma _c}}}{{{a_n}{\rho _{S{D_n}}}{\gamma _c} + 1}},\\\label{sinrsdn}
\gamma _{S{D_n}}^{{D_n}} &= {a_n}{\rho _{S{D_n}}}{\gamma _c},
\end{align}
where ${\gamma _c} = {{{P_{com}}} \mathord{\left/
{\vphantom {{{P_{com}}} {{N_0}}}} \right.
\kern-\nulldelimiterspace} {{N_0}}}$ denotes the transmit SNR of the MaBS.

The received SINR of $D_f$ decoding its own signal over the direct link can be expressed as
\vspace{-1 mm}
\begin{equation}
\label{sinrsdf}
\gamma _{S{D_f}}^{{D_f}} = \frac{{{a_f}{\rho _{S{D_f}}}{\gamma _c}}}{{{a_n}{\rho _{S{D_f}}}{\gamma _c} + 1}}.
\end{equation}

\vspace{-4 mm}
\subsection{Cooperative Links}
\vspace{-1 mm}
For the cooperative links, $R$ decodes the IoT devices' desired signals in turn by employing SIC and treats the echo and LSI as interferences for communication. Thus, the received SINRs of $x_f$ and $x_n$ at $R$ can be expressed as
\begin{align}\label{sinryrf}
\gamma _{SR}^{{D_f}} &= \frac{{{a_f}{\rho _{SR}}{\gamma _c}}}{{{a_n}{\rho _{SR}}{\gamma _c} + {\rho _{RR}}\delta {\gamma _r} + {\rho _{LI}}\omega {\gamma _r} + 1}},\\\label{sinryrn}
\gamma _{SR}^{{D_n}}& = \frac{{{a_n}{\rho _{SR}}{\gamma _c}}}{{{\rho _{RR}}\delta {\gamma _r} + {\rho _{LI}}\omega {\gamma _r} + 1}},
\end{align}
respectively, where ${\gamma _r} = {{{P_{sen}}} \mathord{\left/
 {\vphantom {{{P_{rad}}} {{N_0}}}} \right.
 \kern-\nulldelimiterspace} {{N_0}}}$ denotes the transmit SNR of the MiBS.

If $x_f$ is decoded successfully at $R$, i.e., the received SINR of $x_f$ exceed the outage threshold, $x_f$ will be sent to $D_f$. Then, the SINR of $x_f$ at $D_f$ can be expressed as
\begin{equation}
\label{sinrff}
\gamma _{R{D_f}}^{{D_f}} = \frac{{{a_f}{\rho _{R{D_f}}}{\gamma _r}}}{{{a_n}{\rho _{R{D_f}}}{\gamma _r} + 1}}.
\end{equation}

Similarly, if both $x_f$ and $x_n$ are decoded successfully at $R$, the signals will be sent to the $D_n$. At $D_n$, the high-power signal $x_f$ will be decoded first, and then cancel it until $x_n$ is detected. Therefore, the received SINRs of $x_f$ and $x_n$ at $D_n$ are respectively expressed as
\begin{align}\label{sinrdnf}
\gamma _{R{D_n}}^{{D_f}} &= \frac{{{a_f}{\rho _{R{D_n}}}{\gamma _r}}}{{{a_n}{\rho _{R{D_n}}}{\gamma _r} + 1}},\\\label{sinrdnn}
\gamma _{R{D_n}}^{{D_n}} &= {a_n}{\rho _{R{D_n}}}{\gamma _r}.
\end{align}

\section{Outage Probability Analysis}
\vspace{-1 mm}
In this section, we derive the exact and asymptotic outage probabilities of the IoT devices as well as the diversity orders to evaluate the communication performance of the proposed JCS NOMA system.

\vspace{-5 mm}
\subsection{Exact Outage Probability}
\vspace{-1 mm}
The outage event generates at $D_f$, when one of the following two events occurs. Case \emph{i)} $R$ cannot decode $x_f$ successfully and $D_f$ cannot decode $x_f$ over the direct link; Case \emph{ii)} $R$ decodes and forwards $x_f$ to $D_f$, while $D_f$ fails to decode the expected signal over the direct link or with the aid of the MiBS. Thus, the outage probability of $D_f$ over the direct and the MiBS cooperative links is expressed as\footnote{In this paper, the selection combining (SC) scheme is adopted to calculate the outage probabilities of the IoT devices, since it has low implementation cost and does not reduce the diversity gain. Our analysis can be extended to the maximal ratio combining (MRC) scheme, which will be set aside for our future work.}
\begin{equation}
\label{oprdf}
{\rm P}_{\rm{out}}^{{D_f}} = \underbrace {\Pr \left( {\max \left( {\gamma _{SR}^{{D_f}},\gamma _{S{D_f}}^{{D_f}}} \right) < {\gamma _{{\text{t}}{{\text{h}}_f}}}} \right)}_{{I_1}} + \underbrace {\Pr \left( {\gamma _{SR}^{{D_f}} \geqslant {\gamma _{{\text{t}}{{\text{h}}_f}}},\max \left( {\gamma _{R{D_f}}^{{D_f}},\gamma _{S{D_f}}^{{D_f}}} \right) < {\gamma _{{{\rm{th}}_f}}}} \right)}_{{I_2}},
\end{equation}
\!\!where ${\gamma _{{\text{t}}{{\text{h}}_f}}}$ denotes the SNR threshold of $x_f$, $I_1$ refers to Case \emph{i} and $I_2$ refers to Case \emph{ii}. The exact outage probability expression of $D_f$ for the considered JCS NOMA system is given in the following theorem.
\begin{theorem}
The exact outage probability of the far IoT device is expressed as
\begin{align}\nonumber
  {\rm P}_{\rm{out}}^{{D_f}} &= \left( {1 - {e^{ - \frac{{{\theta _1}}}{{{\beta _{S{D_f}}}}}}}} \right)\left\{ {1 - \frac{{{\beta _{SR}}}}{{2{\beta _{RR}}\left( {{\beta _{SR}} + {\beta _{LI}}\omega {\gamma _r}{\theta _1}} \right)}}\sqrt {\frac{{\pi {\beta _{SR}}}}{{\delta {\gamma _r}{\theta _1}}}} } \right. \hfill \\\label{oprf1}
   &\times \left. {{e^{\frac{{{\beta _{SR}}}}{{4\beta _{RR}^2\delta {\gamma _r}{\theta _1}}} - \frac{{{\theta _1}}}{{{\beta _{SR}}}} - \frac{{{\varphi _1}}}{{{\beta _{R{D_f}}}}}}}\left[ {1 - {\rm{erfc}}\left( {\frac{1}{{2{\beta _{RR}}}}\sqrt {\frac{{{\beta _{SR}}}}{{\delta {\gamma _r}{\theta _1}}}} } \right)} \right]} \right\},
\end{align}
\!\!where ${\theta _1} = {{{\gamma _{{{\rm{th}}_f}}}} \mathord{\left/
 {\vphantom {{{\gamma _{{{\rm{th}}_f}}}} {\left( {{a_f}{\gamma _c} - {a_n}{\gamma _c}{\gamma _{{{\rm{th}}_f}}}} \right)}}} \right.
 \kern-\nulldelimiterspace} {\left( {{a_f}{\gamma _c} - {a_n}{\gamma _c}{\gamma _{{{\rm{th}}_f}}}} \right)}}$, ${\varphi _1} = {{{\gamma _{{{\rm{th}}_f}}}} \mathord{\left/
 {\vphantom {{{\gamma _{{{\rm{th}}_f}}}} {\left( {{a_f}{\gamma _r} - {a_n}{\gamma _r}{\gamma _{{{\rm{th}}_f}}}} \right)}}} \right.
 \kern-\nulldelimiterspace} {\left( {{a_f}{\gamma _r} - {a_n}{\gamma _r}{\gamma _{{{\rm{th}}_f}}}} \right)}}$ with ${{a_f} > {a_n}{\gamma _{{{\rm{th}}_f}}}}$, otherwise ${\rm P}_{\rm{out}}^{{D_f}} = 1$ and $\rm{erfc}\left(  \cdot  \right)$ denotes the complementary error function which can be expressed as

\begin{align}
\label{erfc}
{\rm{erfc}} \left( z \right) = \frac{2}{{\sqrt \pi  }}\int_z^\infty  {{e^{ - {t^2}}}} dt.
\end{align}
\end{theorem}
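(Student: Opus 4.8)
The plan is to evaluate the two probabilities $I_1$ and $I_2$ in (\ref{oprdf}) separately and then add them, exploiting the mutual independence of the channel gains $\rho_{SD_f}$, $\rho_{SR}$, $\rho_{RR}$, $\rho_{LI}$, and $\rho_{RD_f}$. First I would convert each SINR event into an event on the underlying gains. Using (\ref{sinrsdf}), the condition $\gamma_{SD_f}^{D_f} < \gamma_{{\rm th}_f}$ is equivalent to $\rho_{SD_f} < \theta_1$ provided $a_f > a_n\gamma_{{\rm th}_f}$ (and is \emph{certain} otherwise, which is the source of the degenerate case ${\rm P}_{\rm out}^{D_f}=1$). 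Likewise, from (\ref{sinrff}), $\gamma_{RD_f}^{D_f} < \gamma_{{\rm th}_f} \Leftrightarrow \rho_{RD_f} < \varphi_1$, and from (\ref{sinryrf}), $\gamma_{SR}^{D_f} \geq \gamma_{{\rm th}_f} \Leftrightarrow \rho_{SR} \geq \theta_1\left(\delta\gamma_r\rho_{RR} + \omega\gamma_r\rho_{LI} + 1\right)$.

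Because $\max(\cdot)<\gamma_{{\rm th}_f}$ forces every argument below threshold, both $I_1$ and $I_2$ contain the common independent factor $\Pr(\rho_{SD_f}<\theta_1) = 1 - e^{-\theta_1/\beta_{SD_f}}$, which I would pull out front; this produces the leading bracket of (\ref{oprf1}). Writing $p \triangleq \Pr(\gamma_{SR}^{D_f}\geq\gamma_{{\rm th}_f})$ and using the exponential CDF of $\rho_{RD_f}$, the two terms collapse to ${\rm P}_{\rm out}^{D_f} = (1-e^{-\theta_1/\beta_{SD_f}})\left[1 - p\,e^{-\varphi_1/\beta_{RD_f}}\right]$, since after the $\rho_{SD_f}$ factor is removed $I_1$ contributes $(1-p)$ and $I_2$ contributes $p\left(1-e^{-\varphi_1/\beta_{RD_f}}\right)$. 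Everything now hinges on computing $p$.

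The core of the proof is therefore the evaluation of $p$. Conditioning on $\rho_{RR}$ and $\rho_{LI}$, integrating out $\rho_{SR}$ first via its exponential tail $\Pr(\rho_{SR}\geq y)=e^{-y/\beta_{SR}}$, and factoring the remaining expectation by the independence of $\rho_{LI}$ and $\rho_{RR}$, I get $p = e^{-\theta_1/\beta_{SR}}\,J_{LI}\,J_{RR}$, where $J_{LI} = \mathbb{E}\!\left[e^{-\theta_1\omega\gamma_r\rho_{LI}/\beta_{SR}}\right]$ is an elementary exponential average yielding the rational factor $\beta_{SR}/(\beta_{SR}+\beta_{LI}\omega\gamma_r\theta_1)$, and $J_{RR} = \mathbb{E}\!\left[e^{-\theta_1\delta\gamma_r\rho_{RR}/\beta_{SR}}\right]$ uses the non-standard density (\ref{rhorrpdf}). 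The hard part is $J_{RR}$: substituting $t=\sqrt{u}$ turns it into $\tfrac{1}{\beta_{RR}}\int_0^\infty e^{-a t^2 - b t}\,dt$ with $a=\theta_1\delta\gamma_r/\beta_{SR}$ and $b=1/\beta_{RR}$, and completing the square reduces this tabulated Gaussian integral to $\tfrac{1}{2}\sqrt{\pi/a}\,e^{b^2/(4a)}\,{\rm erfc}(b/(2\sqrt{a}))$, which supplies the factor $\sqrt{\pi\beta_{SR}/(\delta\gamma_r\theta_1)}$, the exponential $e^{\beta_{SR}/(4\beta_{RR}^2\delta\gamma_r\theta_1)}$, and the error-function term appearing in (\ref{oprf1}).

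I expect the Gaussian-integral step inside $J_{RR}$ to be the main obstacle, both because the $\sqrt{x}$ in the exponent of (\ref{rhorrpdf}) is what makes the substitution non-obvious and because it is the only place where the complementary error function (\ref{erfc}) enters. Collecting $e^{-\theta_1/\beta_{SR}}$, $J_{LI}$, $J_{RR}$, and the factor $e^{-\varphi_1/\beta_{RD_f}}$ from the $R\to D_f$ link, then multiplying by the pulled-out $(1-e^{-\theta_1/\beta_{SD_f}})$ and regrouping the exponents, yields (\ref{oprf1}); I would finally record the degenerate branch ${\rm P}_{\rm out}^{D_f}=1$ for $a_f \leq a_n\gamma_{{\rm th}_f}$ separately.
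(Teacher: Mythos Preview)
Your proposal is correct and follows essentially the same route as the paper's Appendix~A: decompose ${\rm P}_{\rm out}^{D_f}$ into the two cases $I_1$ and $I_2$, translate each SINR threshold into a linear condition on the exponential gains, integrate out $\rho_{SR}$, $\rho_{LI}$, $\rho_{RD_f}$, and $\rho_{SD_f}$ directly, and handle the $\rho_{RR}$-average via the substitution $t=\sqrt{x}$ together with the Gaussian identity $\int_0^\infty e^{-at^2-bt}\,dt=\tfrac{1}{2}\sqrt{\pi/a}\,e^{b^2/(4a)}\operatorname{erfc}\!\bigl(b/(2\sqrt a)\bigr)$. The only difference is organizational---you collapse $I_1+I_2$ to $(1-e^{-\theta_1/\beta_{SD_f}})\bigl[1-p\,e^{-\varphi_1/\beta_{RD_f}}\bigr]$ before computing $p$, whereas the paper evaluates $I_1$ and $I_2$ separately (equations (A.1)--(A.6)) and combines them at the end---which is the same argument presented more compactly.
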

\begin{proof}
See Appendix A.
\end{proof}

Similarly, the outage event generates at $D_n$ when: Case \emph{iii)} $R$ cannot decode $x_f$ or $x_n$ successfully and $D_n$ cannot decode $x_f$ or $x_n$ over the direct link; Case \emph{iv)} $R$ decodes and forwards the signals to $D_n$, while $D_n$ fails to decode the signals over the direct link or with the aid of the MiBS. Thus, the outage probability of $D_n$ over the direct and the MiBS cooperative links is expressed as
\begin{align}\nonumber
{\rm P}_{\rm{out}}^{{D_n}} &= \underbrace {\Pr \left( {\min \left( {\frac{{\gamma _{SR}^{{D_f}}}}{{{\gamma _{{{\rm{th}}_f}}}}},\frac{{\gamma _{SR}^{{D_n}}}}{{{\gamma _{{{\rm{th}}_n}}}}}} \right) < 1,\min \left( {\frac{{\gamma _{S{D_n}}^{{D_f}}}}{{{\gamma _{{{\rm{th}}_f}}}}},\frac{{\gamma _{S{D_n}}^{{D_n}}}}{{{\gamma _{{{\rm{th}}_n}}}}}} \right) < 1} \right)}_{{I_3}} \\\label{oprdn}
  & + \underbrace {\Pr \left( {\min \left( {\frac{{\gamma _{SR}^{{D_f}}}}{{{\gamma _{{{\rm{th}}_f}}}}},\frac{{\gamma _{SR}^{{D_n}}}}{{{\gamma _{{{\rm{th}}_n}}}}}} \right) \geqslant 1,\min \left( {\frac{{\gamma _{R{D_n}}^{{D_f}}}}{{{\gamma _{{{\rm{th}}_f}}}}},\frac{{\gamma _{R{D_n}}^{{D_n}}}}{{{\gamma _{{{\rm{th}}_n}}}}}} \right) < 1,\min \left( {\frac{{\gamma _{S{D_n}}^{{D_f}}}}{{{\gamma _{{{\rm{th}}_f}}}}},\frac{{\gamma _{S{D_n}}^{{D_n}}}}{{{\gamma _{{{\rm{th}}_n}}}}}} \right) < 1} \right)}_{{I_4}},
\end{align}
\!\!where ${\gamma _{{{\rm{th}}_n}}}$ denotes the SNR threshold of $x_n$, $I_3$ refers to Case \emph{iii)} and $I_4$ refers to Case \emph{iv)}. The exact outage probability expression of $D_n$ for the considered JCS NOMA system is given in the following theorem.
\begin{theorem}
The exact outage probability of the near IoT device is expressed as
\begin{align}\nonumber
  {\rm P}_{\rm{out}}^{{D_n}} &= \left( {1 - {e^{ - \frac{\theta }{{{\beta _{S{D_n}}}}}}}} \right)\left\{ {1 - \frac{{{\beta _{SR}}}}{{2{\beta _{RR}}\left( {{\beta _{SR}} + {\beta _{LI}}\omega {\gamma _r}\theta } \right)}}\sqrt {\frac{{\pi {\beta _{SR}}}}{{\delta {\gamma _r}\theta }}} } \right. \\\label{oprdn1}
   &\times \left. {{e^{\frac{{{\beta _{SR}}}}{{4\beta _{RR}^2\delta {\gamma _r}\theta }} - \frac{\theta }{{{\beta _{SR}}}} - \frac{\varphi }{{{\beta _{R{D_n}}}}}}}\left[ {1 - {\rm{erfc}}\left( {\frac{1}{{2{\beta _{RR}}}}\sqrt {\frac{{{\beta _{SR}}}}{{\delta {\gamma _r}\theta }}} } \right)} \right]} \right\},
\end{align}
where $\theta  \triangleq \max \left( {{\theta _1},{\theta _2}} \right)$, ${\theta _2} = {{{\gamma _{{{\rm{th}}_n}}}} \mathord{\left/
 {\vphantom {{{\gamma _{thn}}} {\left( {{a_n}{\gamma _c}} \right)}}} \right.
 \kern-\nulldelimiterspace} {\left( {{a_n}{\gamma _c}} \right)}}$, $\varphi  \triangleq \max \left( {{\varphi _1},{\varphi _2}} \right)$, ${\varphi _2} = {{{\gamma _{{{\rm{th}}_n}}}} \mathord{\left/
 {\vphantom {{{\gamma _{{{\rm{th}}_n}}}} {\left( {{a_n}{\gamma _r}} \right)}}} \right.
 \kern-\nulldelimiterspace} {\left( {{a_n}{\gamma _r}} \right)}}$. $\theta_1$ and $\varphi_1$ can be obtained from Theorem 1 with ${{a_f} > {a_n}{\gamma _{{{\rm{th}}_f}}}}$, otherwise ${\rm P}_{\rm{out}}^{{D_n}} = 1$.
\end{theorem}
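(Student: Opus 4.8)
The plan is to exploit the structural parallel between this statement and Theorem~1: once the events in (\ref{oprdn}) are rewritten as threshold events on the instantaneous channel gains, the near-device outage collapses onto exactly the same algebraic skeleton as $\mathrm{P}_{\mathrm{out}}^{D_f}$, but with $\theta_1,\varphi_1,\beta_{RD_f}$ replaced by $\theta,\varphi,\beta_{RD_n}$. First I would translate each $\min(\cdot,\cdot)<1$ (respectively $\geqslant 1$) into an equivalent condition on a single gain. Using (\ref{sinryrf})--(\ref{sinryrn}), the joint requirement $\gamma_{SR}^{D_f}\geqslant\gamma_{\mathrm{th}_f}$ and $\gamma_{SR}^{D_n}\geqslant\gamma_{\mathrm{th}_n}$ is equivalent to $\rho_{SR}\geqslant\theta(\rho_{RR}\delta\gamma_r+\rho_{LI}\omega\gamma_r+1)$ with $\theta=\max(\theta_1,\theta_2)$, because the two individual constraints produce the coefficients $\theta_1$ and $\theta_2$ and the binding one is their maximum. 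This is precisely where the $\max$ enters, and where the hypothesis $a_f>a_n\gamma_{\mathrm{th}_f}$ is needed to keep $\theta_1$ finite and positive. Likewise, from (\ref{sinrsdnf})--(\ref{sinrsdn}) the direct-link success $\min(\cdot)\geqslant 1$ becomes $\rho_{SD_n}\geqslant\theta$, and from (\ref{sinrdnf})--(\ref{sinrdnn}) the relayed-link success becomes $\rho_{RD_n}\geqslant\varphi$ with $\varphi=\max(\varphi_1,\varphi_2)$.

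With these reductions I would introduce the three independent success events $A=\{\rho_{SR}\geqslant\theta(\rho_{RR}\delta\gamma_r+\rho_{LI}\omega\gamma_r+1)\}$, $B=\{\rho_{SD_n}\geqslant\theta\}$, and $C=\{\rho_{RD_n}\geqslant\varphi\}$, which depend on disjoint sets of channels and are therefore mutually independent. The two summands become $I_3=\Pr(\bar A)\Pr(\bar B)$ and $I_4=\Pr(A)\Pr(\bar C)\Pr(\bar B)$, and combining them gives the compact form $\mathrm{P}_{\mathrm{out}}^{D_n}=\Pr(\bar B)\left[1-\Pr(A)\Pr(C)\right]$. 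Here $\Pr(\bar B)=F_{\rho_{SD_n}}(\theta)=1-e^{-\theta/\beta_{SD_n}}$ and $\Pr(C)=1-F_{\rho_{RD_n}}(\varphi)=e^{-\varphi/\beta_{RD_n}}$ follow immediately from the exponential CDF, so these reproduce the outer prefactor $(1-e^{-\theta/\beta_{SD_n}})$ and the $e^{-\varphi/\beta_{RD_n}}$ contribution in (\ref{oprdn1}).

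The only nontrivial quantity is $\Pr(A)$, which I expect to be the \emph{main obstacle}. Conditioning on $\rho_{RR}$ and $\rho_{LI}$ and using the exponential tail of $\rho_{SR}$ gives $\Pr(A)=e^{-\theta/\beta_{SR}}\,\mathbb{E}_{\rho_{LI}}[e^{-\theta\omega\gamma_r\rho_{LI}/\beta_{SR}}]\,\mathbb{E}_{\rho_{RR}}[e^{-\theta\delta\gamma_r\rho_{RR}/\beta_{SR}}]$. The loop-self-interference average is merely the Laplace transform of an exponential, yielding $\beta_{SR}/(\beta_{SR}+\beta_{LI}\omega\gamma_r\theta)$. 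The echo average is the delicate step: inserting the non-standard density (\ref{rhorrpdf}) from Lemma~1 and substituting $u=\sqrt{x}$ turns it into the Gaussian-type integral $\tfrac{1}{\beta_{RR}}\int_0^\infty e^{-su^2-u/\beta_{RR}}\,du$ with $s=\theta\delta\gamma_r/\beta_{SR}$; completing the square and normalizing the lower limit produces the factor $\sqrt{\pi\beta_{SR}/(\delta\gamma_r\theta)}\,e^{\beta_{SR}/(4\beta_{RR}^2\delta\gamma_r\theta)}$ together with a complementary error function evaluated at $\tfrac{1}{2\beta_{RR}}\sqrt{\beta_{SR}/(\delta\gamma_r\theta)}$, in the sense of (\ref{erfc}). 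I would cross-check the resulting multiplicative constant against the limit $\delta\to 0$, in which the echo average should collapse to unity. Finally, assembling $\Pr(\bar B)$, $\Pr(A)$, and $\Pr(C)$ and collecting the exponentials $-\theta/\beta_{SR}-\varphi/\beta_{RD_n}+\beta_{SR}/(4\beta_{RR}^2\delta\gamma_r\theta)$ yields (\ref{oprdn1}), identical in form to (\ref{oprf1}) under the replacements $\theta_1\to\theta$, $\varphi_1\to\varphi$, $\beta_{RD_f}\to\beta_{RD_n}$, with the degenerate case $a_f\leqslant a_n\gamma_{\mathrm{th}_f}$ forcing $\mathrm{P}_{\mathrm{out}}^{D_n}=1$.
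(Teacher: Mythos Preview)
Your proposal is correct and follows essentially the same route as the paper: the paper's own proof simply states ``Similar to Appendix~A, substituting (\ref{sinrsdnf}), (\ref{sinrsdn}), and (\ref{sinryrf})--(\ref{sinrdnn}) into (\ref{oprdn}), using the PDFs and CDFs of $\rho_i$, (\ref{oprdn1}) can be obtained,'' and your write-up is precisely that Appendix~A computation with the substitutions $\theta_1\to\theta$, $\varphi_1\to\varphi$, $\beta_{RD_f}\to\beta_{RD_n}$ made explicit. Your reorganization via the independent events $A,B,C$ and the identity $I_3+I_4=\Pr(\bar B)[1-\Pr(A)\Pr(C)]$ is a clean way to package the same integrals the paper computes term by term in (A.1)--(A.6).
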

\begin{proof}
Similar to Appendix A, substituting (\ref{sinrsdnf}), (\ref{sinrsdn}), and (\ref{sinryrf})-(\ref{sinrdnn}) into (\ref{oprdn}), using the PDFs and CDFs of $\rho _i$, (\ref{oprdn1}) can be obtained through a series of mathematical calculations.
\end{proof}
\begin{remark}
It can be observed from Theorem 1 and Theorem 2 that the outage probabilities are related to the paths between the nodes. For the direct links, the outage probabilities of the far and near IoT devices are respectively given by ${\rm P}_{\rm{out}}^{{D_f}} = 1 - \exp \left( { - {{{\theta _1}} \mathord{\left/
 {\vphantom {{{\theta _1}} {{\beta _{S{D_f}}}}}} \right.
 \kern-\nulldelimiterspace} {{\beta _{S{D_f}}}}}} \right)$ and ${\rm P}_{\rm{out}}^{{D_n}} = 1 - \exp \left( { - {\theta  \mathord{\left/
 {\vphantom {\theta  {{\beta _{S{D_n}}}}}} \right.
 \kern-\nulldelimiterspace} {{\beta _{S{D_n}}}}}} \right)$. For the cooperative links, the LSI and echo generate great influence on the outage probabilities of the IoT devices. It should be pointed that if the target is out of the sensing range, the system reduces to the traditional FD-NOMA scenario. In addition, when $\omega  = 0$, the MiBS reduces to the HD operation mode.
\end{remark}

\subsection{Diversity Order}
\vspace{-2 mm}
To obtain further insights for the considered cooperative NOMA scheme, the diversity orders of the IoT devices in the high SNR regime are analyzed in this subsection. The definition of diversity order is expressed as\cite{liumaeu}
\begin{equation}
\label{deford}
{D_O} =  - \mathop {\lim }\limits_{\Upsilon  \to \infty } \frac{{\log \left( {{\rm P}_{\rm{out}}^\infty } \right)}}{{\log \Upsilon }},
\end{equation}
where $\Upsilon  \in \left\{ {{\gamma _c},{\gamma _r}} \right\}$ denotes the transmitted SNR and ${\rm P}_{\rm{out}}^\infty $ is the asymptotic outage probability of the IoT devices in the high SNR regime.

The asymptotic outage probabilities of $D_f$ and $D_n$ in the high SNR regime for the considered JCS NOMA system are summarized in the following corollaries.
\vspace{-5 mm}
\begin{corollary}
The asymptotic outage probability of $D_f$ is expressed as
\begin{equation}\label{asdf2}
  {\rm{P}}_{\rm{out}}^{{D_f},\infty } \!\approx \!\frac{{{\theta _1}}}{{{\beta _{S{D_f}}}}}\!\left\{\! {1\! - \!\left( \!{1 \!- \!\frac{{{\varphi _1}}}{{{\beta _{R{D_f}}}}}} \right)\!\frac{{{\beta _{SR}}}}{{2{\beta _{RR}}\left( {{\beta _{SR}} \!+ \!{\beta _{LI}}\omega \theta _1^*} \right)}}\sqrt {\frac{{\pi {\beta _{SR}}}}{{\delta \theta _1^*}}} {e^{\frac{{{\beta _{SR}}}}{{4\beta _{RR}^2\delta \theta _1^*}}}}\!\left[ {1 \!-\! {\rm{erfc}}\!\left( {\frac{1}{{2{\beta _{RR}}}}\sqrt {\frac{{{\beta _{SR}}}}{{\delta \theta _1^*}}} } \!\right)} \!\right]} \!\right\},
\end{equation}
where $\theta _1^* = {{{\gamma _{{{\rm{th}}_f}}}} \mathord{\left/
 {\vphantom {{{\gamma _{thf}}} {\left( {{a_f} - {a_n}{\gamma _{thf}}} \right)}}} \right.
 \kern-\nulldelimiterspace} {\left( {{a_f} - {a_n}{\gamma _{{{\rm{th}}_f}}}} \right)}}$.
\end{corollary}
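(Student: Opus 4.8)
The plan is to obtain the asymptotic form (\ref{asdf2}) directly from the exact expression (\ref{oprf1}) of Theorem 1 by letting the transmit SNRs grow large. First I would make the SNR dependence explicit: from the definitions in Theorem 1, $\theta_1 = \theta_1^*/\gamma_c$ and $\varphi_1 = \theta_1^*/\gamma_r$ with $\theta_1^* = \gamma_{\mathrm{th}_f}/(a_f - a_n\gamma_{\mathrm{th}_f})$ being SNR-independent, so that both $\theta_1\to 0$ and $\varphi_1\to 0$ in the high-SNR regime. These two vanishing quantities govern every approximation that follows.

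The decisive observation is the behaviour of the product $\gamma_r\theta_1$, which occurs in the prefactor $\beta_{SR}/[2\beta_{RR}(\beta_{SR}+\beta_{LI}\omega\gamma_r\theta_1)]$, in the exponent $\beta_{SR}/(4\beta_{RR}^2\delta\gamma_r\theta_1)$, and in the complementary-error-function argument. Under the joint high-SNR scaling $\gamma_c = \gamma_r = \Upsilon\to\infty$ used in (\ref{deford}), one has $\gamma_r\theta_1 = \Upsilon(\theta_1^*/\Upsilon) = \theta_1^*$, i.e. this product is \emph{finite and constant} in the limit. Hence the prefactor, the exponent, and the bracket $[1-\mathrm{erfc}(\cdot)]$ each converge to their starred counterparts obtained by the substitution $\gamma_r\theta_1\mapsto\theta_1^*$, which is exactly why $\theta_1^*$ appears in those positions of (\ref{asdf2}).

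It then remains to expand the genuinely vanishing factors. For the leading bracket I would use $1-e^{-x}\approx x$ to get $1-e^{-\theta_1/\beta_{SD_f}}\approx \theta_1/\beta_{SD_f}$, which supplies the $O(1/\Upsilon)$ prefactor of (\ref{asdf2}) and hence the eventual unit diversity order. In the residual exponential $e^{-\theta_1/\beta_{SR}-\varphi_1/\beta_{RD_f}}$ both exponents tend to zero; applying $e^{-x}\approx 1-x$ and keeping the first-order term in $\varphi_1$ produces the factor $(1-\varphi_1/\beta_{RD_f})$, while the $\theta_1/\beta_{SR}$ contribution is collapsed into unity. Assembling the converged prefactor, the exponential $e^{\beta_{SR}/(4\beta_{RR}^2\delta\theta_1^*)}$, the correction $(1-\varphi_1/\beta_{RD_f})$, and the bracket $[1-\mathrm{erfc}(\frac{1}{2\beta_{RR}}\sqrt{\beta_{SR}/(\delta\theta_1^*)})]$ reproduces (\ref{asdf2}).

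The main obstacle is the scaling bookkeeping rather than any difficult integral: one must recognize that the three occurrences of $\gamma_r\theta_1$ remain asymptotically constant, since a naive termwise limit that sends $\theta_1\to 0$ while ignoring the compensating growth of $\gamma_r$ would drive the exponent and the erfc argument to infinity and destroy the sensing- and LSI-dependent constant. A secondary point to justify is the asymmetric Taylor step --- retaining $(1-\varphi_1/\beta_{RD_f})$ to first order while collapsing $e^{-\theta_1/\beta_{SR}}$ to $1$ --- and confirming that the surviving braces tend to a strictly positive constant, so that $\mathrm{P}_{\mathrm{out}}^{D_f,\infty} = O(1/\Upsilon)$ as claimed.
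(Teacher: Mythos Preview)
Your proposal is correct and follows essentially the same approach as the paper's Appendix~B: the key limiting identities $\theta_1\to 0$, $\varphi_1\to 0$, and $\gamma_r\theta_1\to\theta_1^*$ together with the first-order expansion $1-e^{-x}\approx x$ are exactly what the paper invokes. The only cosmetic difference is that you apply these limits directly to the closed-form expression (\ref{oprf1}), whereas the paper returns to the integral decomposition $I_1,I_2$ of Appendix~A before taking limits; the substance is identical.
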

\vspace{-5 mm}
\begin{proof}
See Appendix B.
\end{proof}
By substituting (\ref{asdf2}) into (\ref{deford}), we can obtain that the diversity order of $D_f$ in the high SNR regime is $D_O^f = 1$.
\vspace{-5 mm}
\begin{corollary}
The asymptotic outage probability of $D_n$ is expressed as
\begin{align}\label{asdn2}
{\rm{P}}_{\rm{out}}^{{D_n},\infty } \!\approx\! \frac{\theta }{{{\beta _{S{D_n}}}}}\!\left\{ \!{1\! -\! \left( \!{1 \!-\! \frac{\varphi }{{{\beta _{R{D_n}}}}}} \! \right)\!\frac{{{\beta _{SR}}}}{{2{\beta _{RR}}\left( {{\beta _{SR}}\! +\! {\beta _{LI}}\omega {\theta ^*}} \right)}}\!\sqrt {\frac{{\pi {\beta _{SR}}}}{{\delta {\theta ^*}}}} {e^{\frac{{{\beta _{SR}}}}{{4\beta _{RR}^2\delta {\theta ^*}}}}}\!\left[ \!{1 \!-\! {\rm{erfc}}\!\left( \!{\frac{1}{{2{\beta _{RR}}}}\!\sqrt {\frac{{{\beta _{SR}}}}{{\delta {\theta ^*}}}} } \!\right)} \!\right]} \!\right\},
\end{align}
where ${\theta ^*} \triangleq \max \left( {\theta _1^*,\theta _2^*} \right)$ and $\theta _2^* = {{{\gamma _{{{\rm{th}}_n}}}} \mathord{\left/
 {\vphantom {{{\gamma _{{{\rm{th}}_n}}}} {{a_n}}}} \right.
 \kern-\nulldelimiterspace} {{a_n}}}$.
\end{corollary}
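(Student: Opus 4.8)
The plan is to derive (\ref{asdn2}) as the high-SNR limit of the exact expression (\ref{oprdn1}) of Theorem 2, following the same route used for Corollary 1 in Appendix B. The guiding observation is that (\ref{oprdn1}) is identical in form to the $D_f$ expression (\ref{oprf1}) after the replacements $\theta_1 \mapsto \theta$, $\varphi_1 \mapsto \varphi$, $\beta_{SD_f} \mapsto \beta_{SD_n}$, and $\beta_{RD_f} \mapsto \beta_{RD_n}$, so the limiting procedure that yields (\ref{asdf2}) carries over verbatim once the SNR scaling of $\theta$ and $\varphi$ is pinned down. First I would record this scaling: setting $\gamma_c = \gamma_r = \Upsilon$ as in (\ref{deford}) and writing $\theta^* \triangleq \max(\theta_1^*,\theta_2^*)$, one has $\theta = \theta^*/\Upsilon \to 0$ and $\varphi = \theta^*/\Upsilon \to 0$, while the product $\gamma_r\theta = \theta^*$ stays fixed.

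With the scaling in hand, I would apply four term-by-term approximations to (\ref{oprdn1}). Outside the brace, the prefactor is linearized through $1 - e^{-\theta/\beta_{SD_n}} \approx \theta/\beta_{SD_n}$, which supplies the leading $\Upsilon^{-1}$ factor. Inside the brace, every occurrence of $\gamma_r\theta$ is replaced by the constant $\theta^*$, so that the square-root factor, the exponential $e^{\beta_{SR}/(4\beta_{RR}^2\delta\gamma_r\theta)}$, and the argument of the complementary error function all collapse to finite quantities evaluated at $\theta^*$. The factor $e^{-\theta/\beta_{SR}}$ tends to $1$ and is dropped, whereas $e^{-\varphi/\beta_{RD_n}}$ is kept to first order as $1 - \varphi/\beta_{RD_n}$. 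Assembling these substitutions reproduces (\ref{asdn2}) exactly. Feeding the result into the diversity-order definition (\ref{deford}), the bracketed expression converges to a finite nonzero constant, so the dominant $\Upsilon$-dependence is the $\theta/\beta_{SD_n} \propto \Upsilon^{-1}$ prefactor, which gives $D_O^n = 1$ in line with the far device.

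The step I expect to be the main obstacle is the correct handling of the product $\gamma_r\theta$. Sending $\theta \to 0$ while overlooking the compensating growth of $\gamma_r$ would force $\delta\gamma_r\theta \to 0$, making $e^{\beta_{SR}/(4\beta_{RR}^2\delta\gamma_r\theta)}$ and $\sqrt{1/(\delta\gamma_r\theta)}$ both blow up; treating $\gamma_r$ as fixed would instead suppress the $\varphi$ correction too early. The resolution is the coupled scaling $\gamma_r\theta = \theta^*$, under which the divergent and vanishing contributions cancel and the whole brace settles to a bounded, strictly positive constant. Confirming that this constant neither annihilates the $\Upsilon^{-1}$ prefactor nor changes the power of $\Upsilon$ is what simultaneously secures (\ref{asdn2}) and the unit diversity order; the remaining manipulations are the routine algebraic bookkeeping already carried out in Appendix B.
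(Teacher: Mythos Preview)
Your proposal is correct and follows essentially the same route as the paper: the paper's proof of Corollary~2 simply says ``Similar to Appendix~B,'' and Appendix~B carries out precisely the three limits you identify ($\theta\to 0$, $\varphi\to 0$, $\gamma_r\theta\to\theta^*$) together with the linearization $1-e^{-x}\approx x$ on the direct-link factor. Your one refinement---working directly from the closed form (\ref{oprdn1}) via the structural substitution $(\theta_1,\varphi_1,\beta_{SD_f},\beta_{RD_f})\mapsto(\theta,\varphi,\beta_{SD_n},\beta_{RD_n})$ rather than re-deriving from the integrals in Appendix~A---is a harmless shortcut, and your explicit coupling $\gamma_c=\gamma_r=\Upsilon$ makes precise what the paper leaves implicit in (B.3).
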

\vspace{-5 mm}
\begin{proof}
Similar to Appendix B, substituting (\ref{sinrsdnf}), (\ref{sinrsdn}) and (\ref{sinryrf})-(\ref{sinrdnn}) into (\ref{oprdn}), with the PDF and CDF of $\rho _i$, (\ref{asdn2}) can be obtained after some mathematical calculations.
\end{proof}
By substituting (\ref{asdn2}) into (\ref{deford}), the diversity order of $D_n$ in the high SNR regime is $D_O^n = 1$.
\vspace{-5 mm}
\begin{remark}
For the proposed cooperative NOMA scheme, the outage probabilities of cooperative links are constant in the high SNR regime. The diversity orders caused by cooperative links are zero due to the existence of LSI and echo. In this case, the diversity orders of the IoT devices are provided by the direct links. For the non-cooperative NOMA scheme, the asymptotic outage probabilities of the far and near IoT devices are respectively given by ${\rm P}_{\rm{out}}^{{D_f},\infty } \approx {{{\theta _1}} \mathord{\left/
 {\vphantom {{{\theta _1}} {{\beta _{S{D_f}}}}}} \right.
 \kern-\nulldelimiterspace} {{\beta _{S{D_f}}}}}$ and ${\rm P}_{\rm{out}}^{{D_n},\infty } \approx {\theta  \mathord{\left/
 {\vphantom {\theta  {{\beta _{S{D_n}}}}}} \right.
 \kern-\nulldelimiterspace} {{\beta _{S{D_n}}}}}$, while the corresponding diversity orders are $D_O^f = D_O^n = 1$.
\end{remark}
\vspace{-5 mm}
\section{Ergodic Rate Analysis}

In this section, the ergodic rates of the IoT devices are explored for the considered JCS NOMA system.

The achievable rates of $D_f$ and $D_n$ are respectively expressed as \cite{lix20}
\vspace{-3 mm}
\begin{align}\label{r2f}
{R_f} &= \underbrace {{{\log }_2}\left( {1 + \min \left( {\gamma _{SR}^{{D_f}},\gamma _{R{D_f}}^{{D_f}},\gamma _{R{D_n}}^{{D_f}}} \right)} \right)}_{R_f^{1}} + \underbrace {{{\log }_2}\left( {1 + \min \left( {\gamma _{S{D_f}}^{{D_f}},\gamma _{S{D_n}}^{{D_f}}} \right)} \right)}_{R_f^{2}},\\\label{r2n}
{R_n} &= \underbrace {{{\log }_2}\!\left( {1 \!+\! \min \left( {\gamma _{SR}^{{D_n}},\gamma _{R{D_n}}^{{D_n}}} \right)} \!\right)}_{R_n^{1}} \!+ \!\underbrace {{{\log }_2}\!\left( {1 \!+\! \gamma _{S{D_n}}^{{D_n}}} \right)}_{R_n^{2}}.
\end{align}
\vspace{-5 mm}

Then, the ergodic rates of $D_f$ and $D_n$ are respectively given as
\vspace{-1 mm}
\begin{align}\label{eraf1}
R_{\rm{ave}}^f &= R_{\rm{ave}}^{f,1} + R_{\rm{ave}}^{f,2} = \mathbb{E}\left[ {R_f^1} \right] + \mathbb{E}\left[ {R_f^2} \right],\\\label{eran1}
R_{\rm{ave}}^n& = R_{\rm{ave}}^{n,1} + R_{\rm{ave}}^{n,2} = \mathbb{E}\left[ {R_n^1} \right] + \mathbb{E}\left[ {R_n^2} \right].
\end{align}

\begin{theorem}
The ergodic rate of the far IoT device is expressed as
\vspace{-1 mm}
\begin{align}\nonumber
  R_{\rm{ave}}^f &=  \int_0^\infty  {\frac{{g_c^{\frac{3}{2}}\left( {{w_1}} \right){e^{\frac{{{\beta _{SR}}{g_c}\left( {{w_1}} \right)}}{{4\beta _{RR}^2\delta {\gamma _r}}} - \frac{1}{{{\beta _{SR}}{g_c}\left( {{w_1}} \right)}} - \frac{1}{{{\beta _{R{D_f}}}{g_r}\left( {{w_1}} \right)}} - \frac{1}{{{\beta _{R{D_n}}}{g_r}\left( {{w_1}} \right)}}}}\left[ {1 - {\rm{erfc}}\left( {\frac{1}{{2{\beta _{RR}}}}\sqrt {\frac{{{g_c}\left( {{w_1}} \right){\beta _{SR}}}}{{\delta {\gamma _r}}}} } \right)} \right]}}{{\left( {1 + {w_1}} \right)\left( {{\beta _{SR}}{g_c}\left( {{w_1}} \right) + {\beta _{LI}}\omega {\gamma _r}} \right)}}} d{w_1} \hfill \\\label{rate2f}
   &\times \frac{{{\beta _{SR}}}}{{2{\beta _{RR}}\ln 2}}\sqrt {\frac{{\pi {\beta _{SR}}}}{{\delta {\gamma _r}}}}+ \frac{1}{{\ln 2}}\int_0^\infty  {\frac{1}{{1 + {w_2}}}{e^{ - \frac{1}{{{\beta _{S{D_f}}}{g_c}\left( {{w_2}} \right)}} - \frac{1}{{{\beta _{S{D_n}}}{g_c}\left( {{w_2}} \right)}}}}d{w_2}},
\end{align}
where ${g_c}\left( w \right) = {{\left( {{a_f}{\gamma _c} - {a_n}{\gamma _c}w} \right)} \mathord{\left/
 {\vphantom {{\left( {{a_f}{\gamma _c} - {a_n}{\gamma _c}w} \right)} w}} \right.
 \kern-\nulldelimiterspace} w}$ and ${g_r}\left( w \right) = {{\left( {{a_f}{\gamma _r} - {a_n}{\gamma _r}w} \right)} \mathord{\left/
 {\vphantom {{\left( {{a_f}{\gamma _r} - {a_n}{\gamma _r}w} \right)} w}} \right.
 \kern-\nulldelimiterspace} w}$.

The ergodic rate of the near IoT device is expressed as
\vspace{-1 mm}
\begin{align}\nonumber
  R_{\rm{ave}}^n&=  \int_0^\infty  {\frac{{\frac{{q_c^{\frac{3}{2}}\left( {{w_3}} \right)}}{{\left( {{\beta _{SR}}{q_c}\left( {{w_3}} \right) + {\beta _{LI}}\omega {\gamma _r}} \right)}}{e^{\frac{{{\beta _{SR}}{q_c}\left( {{w_3}} \right)}}{{4\beta _{RR}^2\delta {\gamma _r}}} - \frac{1}{{{\beta _{SR}}{q_c}\left( {{w_3}} \right)}} - \frac{1}{{{\beta _{R{D_n}}}{q_r}\left( {{w_3}} \right)}}}}\left[ {1 - {\rm{erfc}}\left( {\frac{1}{{2{\beta _{RR}}}}\sqrt {\frac{{{q_c}\left( {{w_3}} \right){\beta _{SR}}}}{{\delta {\gamma _r}}}} } \right)} \right]}}{{1 + {w_3}}}} d{w_3} \hfill \\\label{rate2n}
  & \times \frac{{{\beta _{SR}}}}{{2{\beta _{RR}}\ln 2}}\sqrt {\frac{{\pi {\beta _{SR}}}}{{\delta {\gamma _r}}}}- \frac{1}{{\ln 2}}{e^{\frac{1}{{{\beta _{S{D_n}}}{a_n}{\gamma _c}}}}}{\rm{Ei}}\left( { - \frac{1}{{{\beta _{S{D_n}}}{a_n}{\gamma _c}}}} \right),
\end{align}
where ${q_c}\left( w \right) = {{{a_n}{\gamma _c}} \mathord{\left/
 {\vphantom {{{a_n}{\gamma _c}} w}} \right.
 \kern-\nulldelimiterspace} w}$ and ${q_r}\left( w \right) = {{{a_n}{\gamma _r}} \mathord{\left/
 {\vphantom {{{a_n}{\gamma _r}} w}} \right.
 \kern-\nulldelimiterspace} w}$.
\end{theorem}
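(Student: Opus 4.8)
The plan is to evaluate each of the four expectations appearing in (\ref{eraf1})--(\ref{eran1}) through the complementary-CDF (CCDF) representation of the ergodic rate, namely $\mathbb{E}\left[\log_2(1+X)\right]=\frac{1}{\ln 2}\int_0^\infty \frac{1-F_X(x)}{1+x}\,dx$ for a nonnegative random variable $X$, obtained by integration by parts. This recasts each term as a one-dimensional integral whose integrand is the CCDF of the corresponding effective SINR, so the whole task reduces to characterizing these CCDFs. I would then treat $R_{\rm{ave}}^{f,1},R_{\rm{ave}}^{n,1}$ (cooperative) and $R_{\rm{ave}}^{f,2},R_{\rm{ave}}^{n,2}$ (direct) separately.

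For the cooperative contributions the argument of the logarithm in (\ref{r2f})--(\ref{r2n}) is a minimum of SINRs that depend on mutually independent channel gains ($\rho_{SR},\rho_{RR},\rho_{LI}$ for the $S\to R$ branch, and $\rho_{RD_f},\rho_{RD_n}$ for the relay-to-device branches). Hence the CCDF of the minimum factorizes into a product of per-branch CCDFs. First I would compute $\Pr(\gamma_{RD_f}^{D_f}>w)$ and $\Pr(\gamma_{RD_n}^{D_f}>w)$ from (\ref{sinrff}) and (\ref{sinrdnf}); each is a tail probability of a single exponential gain and gives the factors $e^{-1/(\beta_{RD_f}g_r(w))}$ and $e^{-1/(\beta_{RD_n}g_r(w))}$, with $g_r(w)$ as in the statement (and $q_c,q_r$ playing the analogous roles for the near device via (\ref{sinryrn}) and (\ref{sinrdnn})). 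The $S\to R$ branch from (\ref{sinryrf}) is handled by conditioning on $\rho_{RR},\rho_{LI}$: since $\rho_{SR}$ is exponential, $\Pr(\gamma_{SR}^{D_f}>w\mid\rho_{RR},\rho_{LI})=\exp\!\big(-(\rho_{RR}\delta\gamma_r+\rho_{LI}\omega\gamma_r+1)/(\beta_{SR}g_c(w))\big)$, and averaging over the exponential $\rho_{LI}$ (an exponential moment-generating function) produces the rational factor $\beta_{SR}g_c(w)/(\beta_{SR}g_c(w)+\beta_{LI}\omega\gamma_r)$.

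The \emph{main obstacle} is the remaining average over $\rho_{RR}$, whose density (\ref{rhorrpdf}) is non-standard. The key step is the substitution $u=\sqrt{x}$, which linearizes the inner exponent and turns the integral into $\frac{1}{\beta_{RR}}\int_0^\infty \exp(-s u^2-u/\beta_{RR})\,du$ with $s=\delta\gamma_r/(\beta_{SR}g_c(w))$; completing the square in $u$ and matching the resulting Gaussian tail over $[1/(2\sqrt{s}\beta_{RR}),\infty)$ to the complementary error function of (\ref{erfc}) reproduces exactly the $\exp(\beta_{SR}g_c(w)/(4\beta_{RR}^2\delta\gamma_r))$ and $\mathrm{erfc}$ factors of (\ref{rate2f})--(\ref{rate2n}). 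Multiplying the three branch CCDFs, inserting into the CCDF integral, and carrying the $1/(1+w)$ weight yields the first integral of each rate; this outer $w$-integral has no elementary antiderivative, which is why it is retained in integral form.

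Finally I would dispatch the direct-link terms. For $R_{\rm{ave}}^{f,2}=\mathbb{E}[\log_2(1+\min(\gamma_{SD_f}^{D_f},\gamma_{SD_n}^{D_f}))]$, built from (\ref{sinrsdf}) and (\ref{sinrsdnf}), the two branches again involve independent exponential gains, so the CCDF is the product $e^{-1/(\beta_{SD_f}g_c(w))-1/(\beta_{SD_n}g_c(w))}$ and the term is the corresponding one-dimensional integral. For $R_{\rm{ave}}^{n,2}=\mathbb{E}[\log_2(1+a_n\rho_{SD_n}\gamma_c)]$ from (\ref{sinrsdn}), the single exponential gain makes the CCDF integral close in terms of the exponential integral via $\int_0^\infty\frac{e^{-aw}}{1+w}\,dw=-e^{a}\mathrm{Ei}(-a)$ with $a=1/(a_n\gamma_c\beta_{SD_n})$, giving the $\mathrm{Ei}$ term of (\ref{rate2n}). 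Summing the two contributions for each device according to (\ref{eraf1})--(\ref{eran1}) then completes the proof.
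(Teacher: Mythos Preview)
Your proposal is correct and follows essentially the same approach as the paper's proof: the CCDF representation $\mathbb{E}[\log_2(1+X)]=\tfrac{1}{\ln 2}\int_0^\infty\frac{1-F_X(w)}{1+w}\,dw$, factorization of the CCDF of the minimum via independence, averaging the $S\to R$ branch over $\rho_{LI}$ and then over $\rho_{RR}$ with the substitution $u=\sqrt{x}$ and completion of the square to produce the $\mathrm{erfc}$ factor, and the exponential-integral closed form for $R_{\rm ave}^{n,2}$. The paper's Appendix~C proceeds in exactly this way (invoking the same Gaussian/erfc evaluation already worked out for the outage analysis), so your plan matches it step for step.
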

\vspace{-3 mm}
\begin{proof}
See Appendix C.
\end{proof}
\vspace{-1 mm}
Thus, the ergodic sum rate of the IoT devices is expressed as
\begin{equation}
\label{sumr1}
R_{\rm{ave}}^{\rm{sum}} = R_{\rm{ave}}^f + R_{\rm{ave}}^n.
\end{equation}

From (\ref{rate2f}) and (\ref{rate2n}), we can observe that it is knotty to obtain the exact ergodic rates of the IoT devices, if not impossible. As a compromise, we analyze the system ergodic sum rate by adopting some approximation.

The approximate ergodic rates of $D_f$ and $D_n$ are given in the following corollary.
\begin{corollary}
The approximated ergodic rate of the far IoT device is expressed as
\begin{align}\label{aserf}
R_{\rm{ave}}^{ap,f}& \approx \frac{1}{{\ln 2}}\left[ {\ln \left( {1 + \min \left( {\frac{{{a_f}{\beta _{S{D_f}}}{\gamma _c}}}{{{a_n}{\beta _{S{D_f}}}{\gamma _c} + 1}},\frac{{{a_f}{\beta _{S{D_n}}}{\gamma _c}}}{{{a_n}{\beta _{S{D_n}}}{\gamma _c} + 1}}} \right)} \right)} \right. \hfill \\\nonumber
   &\left.+ \ln \left( {1 + \min \left( {\frac{{{a_f}{\beta _{SR}}{\gamma _c}}}{{{a_n}{\beta _{SR}}{\gamma _c} + 2\beta _{RR}^2\delta {\gamma _r} + {\beta _{LI}}\omega {\gamma _r} + 1}},\frac{{{a_f}{\beta _{R{D_f}}}{\gamma _r}}}{{{a_n}{\beta _{R{D_f}}}{\gamma _r} + 1}},\frac{{{a_f}{\beta _{R{D_n}}}{\gamma _r}}}{{{a_n}{\beta _{R{D_n}}}{\gamma _r} + 1}}} \right)} \right)\right].
\end{align}

The approximate ergodic rate of the near IoT device is expressed as
\begin{equation}\label{asern}
R_{\rm{ave}}^{ap,n} \!\approx\! \frac{1}{{\ln 2}}\!\left[ {\ln \!\left( {1\! + \!{a_n}{\beta _{S{D_n}}}{\gamma _c}} \right)\! +\! \ln\! \left(\! {1\! +\! \min \!\left( {\frac{{{a_n}{\beta _{SR}}{\gamma _c}}}{{2\beta _{RR}^2\delta {\gamma _r} \!+\! {\beta _{LI}}\omega {\gamma _r}\! +\! 1}},{a_n}{\beta _{R{D_n}}}{\gamma _r}}\! \right)} \!\right)}\! \right].
\end{equation}
\end{corollary}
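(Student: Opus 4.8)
The plan is to obtain the approximations in \eqref{aserf} and \eqref{asern} directly from the exact decompositions \eqref{eraf1}--\eqref{eran1} by replacing the exact averaging of the rate with a mean-value substitution. Consider a representative component such as $\mathbb{E}\left[ R_f^1 \right] = \mathbb{E}\left[ \log_2\left( 1 + \min\left( \gamma_{SR}^{D_f}, \gamma_{RD_f}^{D_f}, \gamma_{RD_n}^{D_f} \right) \right) \right]$. I would push the expectation through the logarithm, through the minimum, and through each individual SINR ratio, i.e. approximate $\mathbb{E}\left[ \log_2\left( 1 + g(\rho) \right) \right] \approx \log_2\left( 1 + g(\mathbb{E}[\rho]) \right)$, where $g$ denotes the composite $\min$-of-SINRs map and $\rho$ stands for the collection of relevant channel gains. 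This reduces the random rate to a deterministic expression in which every gain is replaced by its mean, so the remaining work is purely the evaluation of first moments and an algebraic rewrite.

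The second step is to compute the required moments. For every Rayleigh-faded link the footnote PDF gives $\mathbb{E}[\rho_i] = \beta_i$ at once. The only non-routine moment is that of the compound echo gain $\rho_{RR}$: using the PDF \eqref{rhorrpdf} from Lemma~1 and the substitution $u = \sqrt{x}$, I would evaluate $\mathbb{E}[\rho_{RR}] = \int_0^\infty \frac{\sqrt{x}}{2\beta_{RR}} e^{-\sqrt{x}/\beta_{RR}}\, dx = \frac{1}{\beta_{RR}} \int_0^\infty u^2 e^{-u/\beta_{RR}}\, du = 2\beta_{RR}^2$, where the last equality is the standard Gamma integral. This is precisely the origin of the term $2\beta_{RR}^2 \delta \gamma_r$ appearing in the interference denominators of \eqref{aserf} and \eqref{asern}.

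Substituting $\rho_i \to \beta_i$ and $\rho_{RR} \to 2\beta_{RR}^2$ into the SINR definitions \eqref{sinrsdnf}, \eqref{sinrsdn}, \eqref{sinryrf}--\eqref{sinrdnn} and writing $\log_2(\cdot) = \ln(\cdot)/\ln 2$ then reproduces each bracketed term; in $R_{\rm{ave}}^{ap,n}$ the second summand collapses to $\ln\left( 1 + a_n \beta_{SD_n} \gamma_c \right)$ because $\gamma_{SD_n}^{D_n}$ carries no interference. The main obstacle is that none of the three interchanges is exact: by Jensen's inequality $\mathbb{E}\left[ \log_2(1+Y) \right] \le \log_2\left( 1 + \mathbb{E}[Y] \right)$, while also $\mathbb{E}\left[ \min(\cdot) \right] \le \min\left( \mathbb{E}[\cdot] \right)$ and $\mathbb{E}\left[ \rho/(a\rho+b) \right] \ne \mathbb{E}[\rho]/(a\mathbb{E}[\rho]+b)$ in general, so the final expressions are approximations rather than identities. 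The plan is therefore to justify the substitution as being tight in the regime where the dominant contribution stems from the mean channel strength, and to defer the quantitative assessment of the gap to the numerical comparison against the exact integrals \eqref{rate2f}--\eqref{rate2n} in Section~VII.
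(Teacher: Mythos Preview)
Your proposal is correct and follows essentially the same approach as the paper's Appendix~D: invoke the mean-value approximation $\mathbb{E}\left[\log_2\left(1+\tfrac{x}{y}\right)\right]\approx\log_2\left(1+\tfrac{\mathbb{E}[x]}{\mathbb{E}[y]}\right)$ term by term, push it through the $\min$, and substitute the first moments of the channel gains. You are in fact more explicit than the paper on the one non-routine step, namely the computation $\mathbb{E}[\rho_{RR}]=2\beta_{RR}^2$ from the PDF in Lemma~1, which the paper only alludes to by saying the remaining moments are obtained ``with the same method'' as $\mathbb{E}[\rho_{SR}]=\beta_{SR}$.
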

Based on (\ref{aserf}) and (\ref{asern}), the ergodic sum rate of the IoT devices can be approximated as
\begin{equation}
\label{asersum}
R_{\rm{ave}}^{ap,\rm{sum}} \approx R_{\rm{ave}}^{ap,f} + R_{\rm{ave}}^{ap,n}.
\end{equation}

\begin{proof}
See Appendix D.
\end{proof}
\vspace{-2 mm}
\begin{remark}
From Theorem 3 and Corollary 3, we can observe that the sensing signal has a negative impact on the ergodic rates of the IoT devices. For the cooperative links, due to the echo and LSI, the ergodic rates of $D_n$ and $D_f$ respectively tend to be ${\log _2}\left[ {1 + {{{a_n}{\beta _{SR}}} \mathord{\left/
 {\vphantom {{{a_n}{\beta _{SR}}} {\left( {2\beta _{RR}^2\delta  + {\beta _{LI}}\omega } \right)}}} \right.
 \kern-\nulldelimiterspace} {\left( {2\beta _{RR}^2\delta  + {\beta _{LI}}\omega } \right)}}} \right]$ and ${\log _2}\left[ {1 + {{{a_f}{\beta _{SR}}} \mathord{\left/
 {\vphantom {{{a_f}{\beta _{SR}}} {\left( {{a_n}{\beta _{SR}} + 2\beta _{RR}^2\delta  + {\beta _{LI}}\omega } \right)}}} \right.
 \kern-\nulldelimiterspace} {\left( {{a_n}{\beta _{SR}} + 2\beta _{RR}^2\delta  + {\beta _{LI}}\omega } \right)}}} \right]$ in the high SNR regime. For the non-cooperative NOMA scheme, the sum rate reduces to ${R_{sum}} = R_f^2 + R_n^2$. Moreover, it is noteworthy that when $\omega=0$, the rates of the far and near IoT devices are respectively expressed as ${R_f} = {{R_f^1} \mathord{\left/
 {\vphantom {{R_f^1} 2}} \right.
 \kern-\nulldelimiterspace} 2} + R_f^2$ and ${R_n} = {{R_n^1} \mathord{\left/
 {\vphantom {{R_n^1} 2}} \right.
 \kern-\nulldelimiterspace} 2} + R_n^2$, where ${1 \mathord{\left/
 {\vphantom {1 2}} \right.
 \kern-\nulldelimiterspace} 2}$ indicates that the signals transmission needs two time slots for HD-NOMA system.
\end{remark}

\section{Sensing Probability Analysis}
\vspace{-1 mm}
For sensing, the fundamental task is to find the target and make the correct judgment. Generally, the received echo is a mixture of the desired signal, interference, and noise. The presence of the target can be determined by the power of the received signals. Assuming that $H_0$ denotes null hypothesis and $H_1$ includes echo, interference and AWGN, they can be defined as\footnote{During the sensing process, the optimal detector needs to know all the information of the target, which is overly idealistic in practice. Therefore, this paper designs a sub-optimal detector based on Likelihood Ratio Test (LRT) \cite{Maio20} to analyze the sensing performance of the MiBS.}
\begin{align}\nonumber
  &{H_0}:{y_{SR}} = {h_{SR}}{y_c} + {h_{LI}}\sqrt {\omega {P_{sen}}} {x_{LI}} + {n_{SR}} \hfill, \\\label{h01}
  &{H_1}:{y_{SR}} = {h_{SR}}{y_c} + {h_{RR}}\sqrt \delta  {y_r} + {h_{LI}}\sqrt {\omega {P_{sen}}} {x_{LI}} + {n_{SR}},
\end{align}
where $y_c$ and $y_r$ denote the signals transmitted by the MaBS and the MiBS.

If the power of interference and noise exceeds the preset threshold, it will lead to a misjudgment and the probability of such an event is classified as the false alarm probability. Correspondingly, the probability of sensing the target accurately is defined as sensing probability \cite{zhangxiand}. Obviously, the probability of target existence determined by the MiBS corresponds to the sensing probability in the case of $H_1$. On the contrary, in the case of $H_0$, the probability that the MiBS makes the same judgment is false alarm probability. From (\ref{h01}), it can be obtained that the received power of the MiBS is the non-central Chi-square random variable with four degree-of-freedoms (DoFs) under $H_0$ and five DoFs under $H_1$. Therefore, according to $H_0$, the false-alarm probability of the MiBS is expressed as
\vspace{-1 mm}
\begin{equation}
\label{fap}
{{\rm P}_{fa}} = {Q_{2}}\left( {\sqrt {\frac{{2{{\left\| {{h_{SR}}\left( {\sqrt {{a_n}{P_{com}}}  + \sqrt {{a_f}{P_{com}}} } \right) + {h_{LI}}\sqrt {\omega {P_{sen}}} } \right\|}^2}}}{{{N_0}}}} ,\sqrt {\frac{{2\zeta }}{{{N_0}}}} } \right),
\end{equation}
where $\zeta $ denotes the sensing threshold and $Q\left( { \cdot , \cdot } \right)$ represents the Marcum Q-function. The sensing probability of the MiBS according to $H_1$ can be expressed as
\vspace{-1 mm}
\begin{equation}
\label{senp}
{{\rm P}_d} = {Q_{\frac{5}{2}}}\left( {\sqrt {\frac{{2{{\left\| {{h_{SR}}\left( {\sqrt {{a_n}{P_{com}}}  + \sqrt {{a_f}{P_{com}}} } \right) + {h_{LI}}\sqrt {\omega {P_{sen}}}  + {h_{RR}}\sqrt {\delta {P_{sen}}} } \right\|}^2}}}{{{N_0}}}} ,\sqrt {\frac{{2\zeta }}{{{N_0}}}} } \right).
\end{equation}

In the following, we adopt the received signal of the MiBS to estimate the target information. According to (\ref{ysr}), the received power at the MiBS can be expressed as
\vspace{-3 mm}
\begin{equation}
\label{repower}
\mathbb{E}\left[ {{{\left| {{y_{SR}}} \right|}^2}} \right] = {\rho _{SR}}{P_{com}} + {\rho _{RR}}\delta {P_{sen}} + {\rho _{LI}}\omega {P_{sen}} + {N_0}.
\end{equation}
Obviously, $\mathbb{E}\left[ {{{\left| {{y_{SR}}} \right|}^2}} \right]$, $\rho_{SR}$, $P_{com}$, $P_{sen}$, and $N_0$ are known for the MiBS. Then the information related to the target can be expressed as
\vspace{-1 mm}
\begin{equation}
\label{rhorra}
{\rho _{RR}}\delta  = \frac{{\mathbb{E}\left[ {{{\left| {{y_{SR}}} \right|}^2}} \right] - \left( {{\rho _{LI}}\omega  + 1} \right){P_{sen}} - {\rho _{SR}}{P_{com}} - {N_0}}}{{{P_{sen}}}}.
\end{equation}

Note that during the signal sensing process, no matter whether the target exists in the sensing range, $\rho_{RR}\delta$ needs to be estimated by the received signal. Then the estimated $\rho_{RR}\delta$ should be re-substituted into the designed detector to determine whether there is a target. If the target is present and sensed by the MiBS, the probability corresponding to this occurrence is the sensing probability, shown as (\ref{senp}). If the target is absent while still sensed by the MiBS, the probability of such an event occurring is the false alarm probability, shown as (\ref{fap}). In this case, the estimated $\rho_{RR}\delta$ will cause a large deviation from the real value.

\begin{remark}
From (\ref{fap}), it can be observed that for the given false alarm probability, there exists a unique sensing threshold corresponding to it, and then the sensing probability can be obtained. In addition, we reiterate that if the target is outside the range of sensing, the obtained result of $\rho_{RR}\delta$ through (\ref{rhorra}) may be generated by the randomness of noise, thus further signal detection is essential to determine whether the target exists.
\end{remark}

\section{Performance Optimization}
\vspace{-1 mm}
Power allocation is essential to enhance the system performance and the resource utilization. Therefore, in this section, two OPA designs are proposed for the considered JCS NOMA system, namely, SCD, and CCD. Specifically, SCD aims to maximize the received SINR of the sensing signal at the MiBS with the IoT devices' QoS constraints and CCD is to maximize the IoT devices' sum rate with the constraints of minimum SINR requirements. It is worth stating that both SCD and CCD are of practical interests. According to (\ref{fap}) and (\ref{senp}), it can be obtained that increasing the SINR of the sensing signal can improve the received echo power, so as to enhance the sensing performance. This means that for SCD, increasing the received SINR of sensing signal at the MiBS can improve the sensing probability and guarantee the outage performance of the IoT devices for the cooperative links. Moreover, the proportion of the sensing signal is constant with the prescribed sensing probability, which indicates that for CCD, the sum rate is improved as well as meeting the sensing performance. Indeed, these OPA problems, from different perspectives, are meaningful to the JCS systems in practical scenarios. To this end, we first derive the received SINR of $x_r$ at the MiBS, which according to (\ref{ysr}) can be expressed as
\begin{equation}
\label{gammsr}
\gamma _{SR}^R = \frac{{\delta {\rho _{RR}}{\gamma _r}}}{{{\rho _{SR}}{\gamma _c} + {\rho _{LI}}\omega {\gamma _r} + 1}}.
\end{equation}

\vspace{-15 mm}
\subsection{SCD Scheme}
On the basis of (\ref{gammsr}), the SCD scheme can be formulated as
\vspace{-2 mm}
\begin{align}
\label{sinrm1}
  \mathop {\rm{maximize} }\limits_{{P_{com}},{P_{sen}},{a_n}}\;\;&\gamma _{SR}^R \hfill \\\tag{39a}
{\text{s}}{\text{.t}}{\text{.  }} &\min \left( {\gamma _{SR}^{{D_f}},\gamma _{R{D_f}}^{{D_f}},\gamma _{R{D_n}}^{{D_f}}} \right) \geqslant {\gamma _{{{\rm{th}}_f}}}, \hfill \\\tag{39b}
  &\min \left( {\gamma _{SR}^{{D_n}},\gamma _{R{D_n}}^{{D_n}}} \right) \geqslant {\gamma _{{{\rm{th}}_n}}} \hfill, \\\tag{39c}
 & 0 \leqslant {P_{com}} \leqslant {P_{\max }},\;{\text{0}} \!\leqslant {P_{sen}} \!\leqslant {P_{\max}}, \hfill \\\tag{39d}
  &{a_n} + {a_f} = 1,{\text{0}} < {a_n} < {a_f} < 1 ,
\end{align}
\!\!where (39a) indicates that $x_f$ can be decoded successfully over the links of $S \to R$, $R\to D_f$, and $R \to D_n$, (39b) indicates that $x_n$ can be decoded successfully over the links of $S \to R$ and $R\to D_n$, (39c) denotes the power budget of the MaBS and the MiBS and (39d) guarantees the enforceability of NOMA protocol.

Substituting (\ref{sinryrf})-(\ref{sinrff}) into (39a) and (39b), the optimization problem can be rewritten as
\begin{align}\label{osinrr}
  \mathop {\rm{maximize} }\limits_{{P_{com}},{P_{sen}},{a_n}}\; \;&\gamma _{SR}^R \hfill \\\tag{40a}
  {\text{s.t.}}&\;{P_{com}} \geqslant \left( {\delta {\rho _{RR}} + \omega {\rho _{LI}}} \right)\theta '{P_{sen}} + {N_0}\theta ', \hfill \\\tag{40b}
  &\;{P_{sen}} \geqslant {C_1}, \;\left( \text{39c} \right),\;\left( \text{39d} \right),
\end{align}
where $\theta'= {\theta ^*} \mathord{\left/
 {\vphantom {{{\theta ^*}} {{\rho _{SR}}}}} \right.
 \kern-\nulldelimiterspace} {\rho _{SR}}$, ${C_1} \triangleq \max \left( {{C_{11}},{C_{12}}} \right)$, ${C_{11}} = {{{N_0}{\theta ^*}} \mathord{\left/
 {\vphantom {{{N_0}{\theta ^*}} {{\rho _{R{D_n}}}}}} \right.
 \kern-\nulldelimiterspace} {{\rho _{R{D_n}}}}}$, and ${C_{12}} = {{{N_0}\theta _1^*} \mathord{\left/
 {\vphantom {{{N_0}\theta _1^*} {{\rho _{R{D_f}}}}}} \right.
 \kern-\nulldelimiterspace} {{\rho _{R{D_f}}}}}$, while $\theta _1^*$ and $\theta ^*$ can be obtained from Corollary 1 and Corollary 2, respectively.

The optimization problem and variables in (\ref{osinrr}) are coupled, and it is difficult to obtain their optimal values at the same time. Thus, we first optimize $P_{sen}$ with fixed $P_{com}$ and $a_n$, and then optimize $P_{com}$ with fixed $P_{sen}$ and $a_n$, finally optimize $a_n$ with fixed $P_{com}$ and $P_{sen}$ by the loop-iteration method.

First, the OPA coefficient $P_{sen}$ is derived with fixed $P_{com}$ and $a_n$. Substituting (\ref{gammsr}) into (\ref{osinrr}), we can observe that $\gamma _{SR}^R$ is a convex function as ${{{\partial ^2}\gamma _{SR}^R} \mathord{\left/
{\vphantom {{{\partial ^2}\gamma _{SR}^R} {\partial P_{sen}^2}}} \right.
\kern-\nulldelimiterspace} {\partial P_{sen}^2}} > 0$. Therefore, the Lagrangian function of (\ref{osinrr}) can be expressed as
\vspace{-1 mm}
\begin{align}\nonumber
 L_1\left( {{P_{sen}},{\lambda _{11}},{\lambda _{12}},{\lambda _{13}}} \right)& = \frac{{\delta {\rho _{RR}}{P_{sen}}}}{{{\rho _{SR}}{P_{com}} + {\rho _{LI}}\omega {P_{sen}} + 1}} + {\lambda _{11}}\left( {{P_{com}} - \left( {\delta {\rho _{RR}} + \omega {\rho _{LI}}} \right)\theta '{P_{sen}} - {N_0}\theta '} \right) \hfill \\\label{lg1}
  & + {\lambda _{12}}\left( {{P_{sen}} - {C_1}} \right) + {\lambda _{13}}\left( {{P_{\max }} - {P_{sen}}} \right),
\end{align}
\!\!where $\lambda _{11}$, $\lambda _{12}$, and $\lambda _{13}$ are the Lagrange multipliers with respect to constraints (40a), (40b), and (39c), respectively. The Karush-Kuhn-Tucker (KKT) condition applied for $L\left(  \cdot  \right)$ is expressed as
\begin{equation}\label{kkt11}
  \frac{{\partial {L_1}}}{{\partial {P_{sen}}}} = \frac{{\delta {\rho _{RR}}\left( {{\rho _{SR}}{P_{com}} + 1} \right)}}{{{{\left( {{\rho _{SR}}{P_{com}} + {\rho _{LI}}\omega {P_{sen}} + 1} \right)}^2}}} - {\lambda _{11}}\left( {\delta {\rho _{RR}} + \omega {\rho _{LI}}} \right)\theta ' + {\lambda _{12}} - {\lambda _{13}}=0.
\end{equation}

From (\ref{kkt11}), after some mathematical manipulations, the OPA coefficient $P_{sen}$ is expressed as
\begin{equation}\label{opsen1}
P_{sen}^* = {\left( {\frac{1}{{{\rho _{LI}}\omega }}\left[ {\sqrt {\frac{{\delta {\rho _{RR}}\left( {{\rho _{SR}}{P_{com}} + 1} \right)}}{{{\lambda _{11}}\left( {\delta {\rho _{RR}} + \omega {\rho _{LI}}} \right)\theta ' - {\lambda _{12}} + {\lambda _{13}}}}}  - \left( {{\rho _{SR}}{P_{com}} + 1} \right)} \right]} \right)^ + },
\end{equation}
where ${\left( \Delta  \right)^ + } \triangleq \max \left( {0,\Delta } \right)$.

Next, we optimize $P_{com}$ with fixed $P_{sen}$ and $a_n$. Obviously, the objective function in (\ref{osinrr}) is monotonically decreasing with respect to $P_{com}$. Herewith, according to (39c), (40a), and (\ref{opsen1}), the OPA parameter $P_{com}^*$ is expressed as
\begin{equation}
\label{opcom1}
P_{com}^* = \min \left( {\left( {\delta {\rho _{RR}} + \omega {\rho _{LI}}} \right)\theta 'P_{sen} + {N_0}\theta ',{P_{\max }}} \right).
\end{equation}

Finally, let us turn our attention to the optimization of $a_n$ with fixed $P_{com}$ and $P_{sen}$. From (\ref{osinrr}), we can observe that $\gamma_{SR}^R$ is independent of $a_n$, while $P_{com}^*$ is related to $a_n$. Furthermore, (\ref{opcom1}) indicates that $P_{com}^*$ needs be the minimum in the feasible regime. By exploiting this property, the optimization problem can be reformulated as
\begin{align}\label{oan}
  \mathop {\rm{minimize} }\limits_{{a_n}} &\;\left( {\delta {\rho _{RR}} + \omega {\rho _{LI}}} \right)\theta '{C_1} + {N_0}\theta ' \hfill \\\tag{45a}
  {\text{s.t.}}&\;\theta ' = \frac{{{\theta ^*}}}{{{\rho _{SR}}}},{\theta ^*} = \max \left( {\theta _1^*,\theta _2^*} \right),\left( \text{39d} \right)\!,
\end{align}
where $\theta _1^* = {{{\gamma _{{{\rm{th}}_f}}}} \mathord{\left/
 {\vphantom {{{\gamma _{{{\rm{th}}_f}}}} {\left( {{a_f} - {a_n}{\gamma _{{{\rm{th}}_f}}}} \right)}}} \right.
 \kern-\nulldelimiterspace} {\left( {{a_f} - {a_n}{\gamma _{{{\rm{th}}_f}}}} \right)}}$ and $\theta _2^* = {{{\gamma _{{{\rm{th}}_n}}}} \mathord{\left/
 {\vphantom {{{\gamma _{{{\rm{th}}_n}}}} {{a_n}}}} \right.
 \kern-\nulldelimiterspace} {{a_n}}}$. Using variable substitution, (\ref{oan}) can be reformulated as
\begin{align}\label{oanf}
  \mathop {\rm{minimize} }\limits_{{a_n}} \;&\max \left( {\theta _1^*,\theta _2^*} \right) \hfill \\\tag{46a}
  \text{s.t.}&\left( \text{{39d}} \right).
\end{align}

It can be observed that $\theta_1^*$ is monotonically increasing, while $\theta_2^*$ is monotonically decreasing with respect to $a_n$. Letting $\theta_1^*=\theta_2^*$, we can obtain $a_n^\dag  = {{{\gamma _{{{\rm{th}}_n}}}} \mathord{\left/
 {\vphantom {{{\gamma _{{{\rm{th}}_n}}}} {\left( {{\gamma _{{{\rm{th}}_f}}} + {\gamma _{{{\rm{th}}_n}}} + {\gamma _{{{\rm{th}}_f}}}{\gamma _{{{\rm{th}}_n}}}} \right)}}} \right.
 \kern-\nulldelimiterspace} {\left( {{\gamma _{{{\rm{th}}_f}}} + {\gamma _{{{\rm{th}}_n}}} + {\gamma _{{{\rm{th}}_f}}}{\gamma _{{{\rm{th}}_n}}}} \right)}}$, that yields
 \vspace{-3 mm}
\begin{equation}\label{thetax}
 \max \left( {\theta _1^*,\theta _2^*} \right) = \left\{ {\begin{array}{*{20}{c}}
  {\theta _2^*,{a_n} \in \left( {0,a_n^\dag } \right]} \\
  {\theta _1^*,{a_n} \in \left[ {a_n^\dag ,0.5} \right)}
\end{array}} \right..
\end{equation}

According to the increase-decrease characteristics of $\theta_1$ and $\theta_2$, $\max \left( {\theta _1^*,\theta _2^*} \right)$ can be minimized if and only if ${{a_n} = a_n^\dag }$. Therefore, the OPA coefficient $a_n^*$ is expressed as
\vspace{-3 mm}
\begin{equation}
\label{opan}
a_n^* = \min \left( {a_n^\dag ,0.5 - \sigma } \right) ,
\end{equation}
where $\sigma  >0$ and $\sigma  \to 0$.

\begin{remark}
The OPA factors can be solved iteratively by alternate optimization algorithm. The computational complexity of the algorithm is $O\left( {\left( {{K_1} + 2} \right){K_2}} \right)$, where $K_1$ is the number of iterations to find the optimal $\lambda _{11}^*$, $\lambda _{12}^*$, and $\lambda _{13}^*$, $K_2$ is the iteration number of the main loop. Furthermore, the proposed SCD scheme focuses on improving the sensing behavior while ensuring the communication performance, which can be applied to intersections, sidewalks and other scenarios with high pedestrian flow density.
\end{remark}

\subsection{CCD Scheme}

Based on (\ref{r2f}) and (\ref{r2n}), the CCD scheme can be expressed as the following mathematical expression
\begin{align}\label{mst1}
  \mathop {\text{maximize} }\limits_{{P_{com}},{P_{sen}},{a_f}}\;\;& {R_{sum}} \hfill \\\tag{49a}
  {\text{s}}{\text{.t}}{\text{.  }}&\min \left( {\gamma _{SR}^{{D_f}},\gamma _{R{D_f}}^{{D_f}},\gamma _{R{D_n}}^{{D_f}},\gamma _{S{D_f}}^{{D_f}},\gamma _{S{D_n}}^{{D_f}}} \right) \geqslant {\gamma _{{{\rm{th}}_f}}}, \hfill \\\tag{49b}
  &\min \left( {\gamma _{SR}^{{D_n}},\gamma _{R{D_n}}^{{D_n}},\gamma _{S{D_n}}^{{D_n}}} \right) \geqslant {\gamma _{{{\rm{th}}_n}}}, \hfill \\\tag{49c}
  &\gamma _{SR}^R \geqslant \kappa,\; \left(\rm{39c}\right), \;\left(\rm{39d}\right),
\end{align}
\!\!where (49a) represents $x_f$ can be decoded successfully over the links of $S \to R$, $R\to D_f$, $R \to D_n$, $S\to D_f$, and $S \to D_n$, (49b) indicates that $x_n$ can be decoded successfully over the links of $S \to R$, $R\to D_n$ and $S\to D_n$, (49c) is the constraint for the received SINR of $x_r$ at the MiBS, $\kappa$ denotes lower bound of the received SINR and $R_{sum}$ is expressed as\footnote{It should be noted that due to the intractability of the sum rate expression, it is extremely difficult to obtain the OPA factors, thus we focus on the approximation sum rate in the high SNR regime \cite{lixiot21} \cite{Abbasi20}.}
\begin{align}\nonumber
  {R_{sum}} &= {\log _2}\left( {1 + \min \left( {\frac{{{a_f}{\rho _{SR}}}}{{{a_n}{\rho _{SR}} + {\rho _{RR}}\delta  + {\rho _{LI}}\omega }},\frac{{{a_f}}}{{{a_n}}}} \right)} \right) + {\log _2}\left( {1 + \frac{{{a_f}}}{{{a_n}}}} \right) \\\label{rsum}
  & + {\log _2}\left( {1 + \min \left( {\frac{{{a_n}{\rho _{SR}}}}{{{\rho _{RR}}\delta  + {\rho _{LI}}\omega }},\frac{{{a_n}{\rho _{R{D_n}}}{P_{sen}}}}{{{N_0}}}} \right)} \right) + {\log _2}\left( {1 + \frac{{{a_n}{\rho _{S{D_n}}}{P_{com}}}}{{{N_0}}}} \right).
\end{align}

It can be observed ${{{a_f}{\rho _{SR}}} \mathord{\left/
{\vphantom {{{a_f}{\rho _{SR}}} {\left( {{a_n}{\rho _{SR}} + {\rho _{RR}}\delta  + {\rho _{LI}}\omega } \right)}}} \right.
\kern-\nulldelimiterspace} {\left( {{a_n}{\rho _{SR}} + {\rho _{RR}}\delta  + {\rho _{LI}}\omega } \right)}} \leqslant {{{a_f}} \mathord{\left/
{\vphantom {{{a_f}} {{a_n}}}} \right.
\kern-\nulldelimiterspace} {{a_n}}}$ is always satisfied from (\ref{rsum}), thus we can rewrite the sum rate as
\vspace{-3 mm}
\begin{equation}\label{sumr1}
{R_{sum}} \!=\! \frac{1}{{\ln 2}}\left[ {\ln \left( {{l_1}} \right) \!-\! \ln \left( {{f_1}} \right)\! -\! \ln \left( {{a_n}} \right)\! +\! \ln \left( {{f_3}} \right) - \ln \left( {{N_0}} \right)\! +\! \min \left( {\ln \left( {{f_1}} \right) \!-\! \ln \left( {{l_2}} \right),\ln \left( {{f_2}} \right) \!-\! \ln \left( {{N_0}} \right)} \right)} \right],
\end{equation}
where ${l_1} = {\rho _{SR}} + {\rho _{RR}}\delta  + {\rho _{LI}}\omega $, ${l_2} = {\rho _{RR}}\delta  + {\rho _{LI}}\omega $, ${f_1} = {a_n}{\rho _{SR}} + {\rho _{RR}}\delta  + {\rho _{LI}}\omega $, ${f_2} = {a_n}{\rho _{R{D_n}}}{P_{sen}} + {N_0}$ and ${f_3} = {a_n}{\rho _{S{D_n}}}{P_{com}} + {N_0}$.

Substituting (\ref{sinrsdnf})-(\ref{sinrff}) and (\ref{gammsr}) into (49a)-(49c), after some mathematical manipulations, (\ref{mst1}) can be rewritten as
\begin{align}\label{osr1}
  \mathop {\text{maximize} }\limits_{{P_{com}},{P_{sen}},{a_n}}\;\;& {R_{sum}} \hfill \\\tag{52a}
  {\text{s}}{\text{.t}}{\text{. }}&{l_4}{P_{sen}} - \kappa {\rho _{SR}}{P_{com}} - \kappa {N_0} \geqslant 0, \hfill \\\tag{52b}
  &{l_3}{P_{com}} - {l_2}{\gamma _{{\rm{th}}_f}}{P_{sen}} - {N_0}{\gamma _{{\rm{th}}_f}} \geqslant 0, \hfill \\\tag{52c}
  &{a_n}{\rho _{SR}}{P_{com}} - {l_2}{\gamma _{{\rm{th}}_n}}{P_{sen}} - {N_0}{\gamma _{{\rm{th}}_f}} \geqslant 0, \hfill \\\tag{52d}
  &{C_1} \leqslant {P_{com}} \leqslant {P_{\max }},{C_1} \leqslant {P_{sen}} \leqslant {P_{\max }},\;\left(\rm{39d}\right),
\end{align}
where ${l_3} = {a_f}{\rho _{SR}} - {a_n}{\rho _{SR}}{\gamma _{{\rm{th}}_f}}$, ${l_4} = \delta {\rho _{RR}} - \kappa {\rho _{LI}}\omega $, ${C_{11}} = {{{N_0}{\gamma _{{\rm{th}}_f}}} \mathord{\left/
 {\vphantom {{{N_0}{\gamma _{{\rm{th}}_f}}} {\left( {{a_f}{\rho _{S{D_f}}} - {a_n}{\rho _{S{D_f}}}{\gamma _{{\rm{th}}_f}}} \right)}}} \right.
 \kern-\nulldelimiterspace} {\left( {{a_f}{\rho _{S{D_f}}} - {a_n}{\rho _{S{D_f}}}{\gamma _{{\rm{th}}_f}}} \right)}}$, ${C_{12}} = {{{N_0}{\gamma _{{\rm{th}}_f}}} \mathord{\left/
 {\vphantom {{{N_0}{\gamma _{{\rm{th}}_f}}} {\left( {{a_f}{\rho _{S{D_n}}} - {a_n}{\rho _{S{D_n}}}{\gamma _{{\rm{th}}_n}}} \right)}}} \right.
 \kern-\nulldelimiterspace} {\left( {{a_f}{\rho _{S{D_n}}} - {a_n}{\rho _{S{D_n}}}{\gamma _{{\rm{th}}_n}}} \right)}}$, ${C_{13}} = {{{N_0}{\gamma _{{\rm{th}}_n}}} \mathord{\left/
 {\vphantom {{{N_0}{\gamma _{{\rm{th}}_n}}} {\left( {{a_n}{\rho _{S{D_n}}}} \right)}}} \right.
 \kern-\nulldelimiterspace} {\left( {{a_n}{\rho _{S{D_n}}}} \right)}}$, and ${C_1} = \max \left( {{C_{11}},{C_{12}},{C_{13}}} \right)$.

Similar to (\ref{osinrr}), it is hard to optimize $P_{com}$, $P_{sen}$, and $a_n$ simultaneously, therefore we first optimize $P_{com}$ and $P_{sen}$ with fixed $a_n$ and then optimize $a_n$ with fixed $P_{com}$ and $P_{sen}$ by the loop-iteration method.

By introducing slack variable $V$ and according to (\ref{sumr1}), we can rewrite (\ref{osr1}) as
\vspace{-3 mm}
\begin{align}\label{opv}
  \mathop {\text{maximize} }\limits_{{P_{com}},{P_{sen}},V}\;\; &V \hfill \\\tag{53a}
{\text{s}}{\text{.t}}{\text{.  }} & V \leqslant  - \ln \left( {{a_n}} \right) + \ln \left( {{f_3}} \right) - \ln \left( {{l_2}} \right), \hfill \\\tag{53b}
 & V \leqslant \! - \ln \left( {{f_1}} \right) \! -\!  \ln \left( {{a_n}} \right) \! +\!  \ln \left( {{f_3}} \right) \! +\!  \ln \left( {{f_2}} \right) \! -\!  \ln \left( {{N_0}} \right), \hfill \\\tag{53c}
  &{l_4}{P_{sen}} - \kappa {\rho _{SR}}{P_{com}} - \kappa {N_0} \geqslant 0, \;\left(\text{52d}\right).
\end{align}
\vspace{-1 mm}
Since (\ref{opv}) is a convex optimization problem, the Lagrangian function can be expressed as
\vspace{-3 mm}
\begin{align}\nonumber
 & L\left( {{P_{com}},{P_{sen}},V,{\lambda _{21}},{\lambda _{22}},{\lambda _{23}},{\lambda _{24}},{\lambda _{25}},{\lambda _{26}},{\lambda _{27}},{\lambda _{28}},{\lambda _{29}}} \right) = V + {\lambda _{21}}\left( { - \ln \left( {{a_n}} \right) + \ln \left( {{f_3}} \right) - \ln \left( {{l_2}} \right) - V} \right) \hfill \\\nonumber
  & + {\lambda _{22}}\left( { - \ln \left( {{f_1}} \right) + \ln \left( {{f_3}} \right)} \right.\left. { - \ln \left( {{a_n}} \right) + \ln \left( {{f_2}} \right) - \ln \left( {{N_0}} \right) - V} \right) + {\lambda _{23}}\left( {{l_4}{P_{sen}} - \kappa {\rho _{SR}}{P_{com}} - \kappa {N_0}} \right) \hfill \\\nonumber
  & + {\lambda _{24}}\left( {{l_3}{P_{com}} - {l_2}{\gamma _{{\rm{th}}_f}}{P_{sen}} - {N_0}{\gamma _{{\rm{th}}_f}}} \right) + {\lambda _{25}}\left( {{a_n}{\rho _{SR}}{P_{com}} - {l_2}{\gamma _{{\rm{th}}_n}}{P_{sen}} - {N_0}{\gamma _{{\rm{th}}_f}}} \right) \hfill \\\label{l2o}
  & + {\lambda _{26}}\left( {{P_{\max }} - {P_{com}}} \right) + {\lambda _{27}}\left( {{P_{\max }} - {P_{sen}}} \right) + {\lambda _{28}}\left( {{P_{com}} - {C_1}} \right) + {\lambda _{29}}\left( {{P_{sen}} - {C_1}} \right).
\end{align}
It is easy to obtain that ${\lambda _{21}} + {\lambda _{22}} = 1$ and the KKT conditions are expressed as
\begin{align}\label{l2o1}
&\frac{{\partial L}}{{\partial {P_{com}}}}= \frac{{{a_n}{\rho _{S{D_n}}}}}{{{f_3}}} - {\lambda _{23}}\kappa {\rho _{SR}} + {\lambda _{24}}{l_3} + {\lambda _{25}}{a_n}{\rho _{SR}} - {\lambda _{26}} +  {\lambda _{28}} = 0,\\\label{l2o2}
 & \frac{{\partial L}}{{\partial {P_{sen}}}} = \frac{{{\lambda _{22}}{a_n}{\rho _{R{D_n}}}}}{{{f_2}}} + {\lambda _{23}}{l_4} - {\lambda _{24}}{l_2}{\gamma _{thf}} - {\lambda _{25}}{l_2}{\gamma _{thn}} - {\lambda _{27}} + {\lambda _{29}} = 0 .
\end{align}
Based on (\ref{l2o1}) and (\ref{l2o2}), the OPA coefficients $P_{com}$ and $P_{sen}$ are expressed as
\begin{align}\label{opcom2}
P_{com}^* &= \left[\frac{1}{{{\lambda _{23}}\kappa {\rho _{SR}} - {\lambda _{24}}{l_3} - {\lambda _{25}}{a_n}{\rho _{SR}} + {\lambda _{26}} - {\lambda _{28}}}} - \frac{{{N_0}}}{{{a_n}{\rho _{S{D_n}}}}}\right]^+,\\\label{opsen2}
P_{sen}^* &= \left[\frac{{{\lambda _{22}}}}{{{\lambda _{24}}{l_2}{\gamma _{{\rm{th}}_f}} - {\lambda _{23}}{l_4} + {\lambda _{25}}{l_2}{\gamma _{{\rm{th}}_n}} + {\lambda _{27}} - {\lambda _{29}}}} - \frac{{{N_0}}}{{{a_n}{\rho _{S{D_n}}}}}\right]^+.
\end{align}

Next, we optimize $a_n$ with fixed $P_{com}$ and $P_{sen}$. Substituting (\ref{sinrsdnf})-(\ref{sinrff}) into (52a) and (52b), after some mathematical manipulations, the optimization problem can be written as
\begin{align}\label{oan2}
  \mathop {\text{maximize} }\limits_{{a_n}} &\;R_{sum} \hfill \\\tag{59a}
{\text{s}}{\text{.t}}{\text{.  }}&{C_2} \leqslant {a_n} \leqslant {C_3},\;\left(\text{45e}\right),
\end{align}
where ${C_2} = \max \left( {{C_{21}},{C_{22}},{C_{23}}} \right)$, ${C_{21}} = {{\left( {{l_2}{P_{sen}}{\gamma _{{\rm{th}}_n}} + {N_0}{\gamma _{{\rm{th}}_n}}} \right)} \mathord{\left/
 {\vphantom {{\left( {{l_2}{P_{sen}}{\gamma _{{\rm{th}}_n}} + {N_0}{\gamma _{{\rm{th}}_n}}} \right)} {\left( {{\rho _{SR}}{P_{com}}} \right)}}} \right.
 \kern-\nulldelimiterspace} {\left( {{\rho _{SR}}{P_{com}}} \right)}}$, ${C_{22}} = {{{N_0}{\gamma _{{\rm{th}}_n}}} \mathord{\left/
 {\vphantom {{{N_0}{\gamma _{{\rm{th}}_n}}} {\left( {{\rho _{S{D_n}}}{P_{com}}} \right)}}} \right.
 \kern-\nulldelimiterspace} {\left( {{\rho _{S{D_n}}}{P_{com}}} \right)}}$, ${C_{23}} = {{{N_0}{\gamma _{{\rm{th}}_n}}} \mathord{\left/
 {\vphantom {{{N_0}{\gamma _{{\rm{th}}_n}}} {\left( {{\rho _{R{D_n}}}{P_{sen}}} \right)}}} \right.
 \kern-\nulldelimiterspace} {\left( {{\rho _{R{D_n}}}{P_{sen}}} \right)}}$, ${C_3} = \min \left( {{C_{31}},{C_{32}},{C_{33}},{C_{34}},{C_{35}}} \right)$, ${C_{31}} = {g_1}\left( {{\rho _{S{D_f}}}} \right)$, ${C_{32}} = {g_1}\left( {{\rho _{S{D_n}}}} \right)$, ${C_{33}} = {g_2}\left( {{\rho _{R{D_f}}}} \right)$, ${C_{34}} = {g_2}\left( {{\rho _{R{D_n}}}} \right)$, ${C_{35}} = {{\left( {{P_{com}}{\rho _{SR}} - {l_2}{P_{sen}}{\gamma _{{\rm{th}}_f}} - {N_0}{\gamma _{{\rm{th}}_f}}} \right)} \mathord{\left/
 {\vphantom {{\left( {{P_{com}}{\rho _{SR}} - {l_2}{P_{sen}}{\gamma _{{\rm{th}}_f}} - {N_0}{\gamma _{{\rm{th}}_f}}} \right)} {\left[ {{P_{com}}{\rho _{SR}}\left( {1 + {\gamma _{{\rm{th}}_f}}} \right)} \right]}}} \right.
 \kern-\nulldelimiterspace} {\left[ {{P_{com}}{\rho _{SR}}\left( {1 + {\gamma _{{\rm{th}}_f}}} \right)} \right]}}$. ${g_1}\left( \rho  \right)$ and ${g_2}\left( \rho  \right)$ are respectively given by
\begin{equation}\label{g1rho}
{g_1}\left( \rho  \right) = \frac{{{P_{com}}\rho  - {N_0}{\gamma _{{\rm{th}}_f}}}}{{{P_{com}}\rho \left( {1 + {\gamma _{{\rm{th}}_f}}} \right)}},
\end{equation}
\begin{equation}\label{g1rho}
{g_2}\left( \rho  \right) = \frac{{{P_{sen}}\rho  - {N_0}{\gamma _{{\rm{th}}_f}}}}{{{P_{sen}}\rho \left( {1 + {\gamma _{{\rm{th}}_f}}} \right)}}.
\end{equation}

For the sake of analysis, $R_{sum}$ is divided into two cases as
\vspace{-1 mm}
\begin{equation}\label{rsan}
{R_{sum}} \!= \!\left\{ {\begin{array}{*{20}{c}}
  {\frac{1}{{\ln 2}}\left[ {\ln \left( {{l_1}} \right)\! -\! \ln \left( {{a_n}} \right)\! +\! \ln \left( {{f_3}} \right) \!- \!\ln \left( {{l_2}} \right) \!-\! \ln \left( {{N_0}} \right)} \right],\;\text{if}\;\ln \left( {{f_1}} \right) \!-\! \ln \left( {{l_2}} \right)\! \leqslant \!\ln \left( {{f_2}} \right) \!-\! \ln \left( {{N_0}} \right)} \\
  {\frac{1}{{\ln 2}}\left[ {\ln \left( {{l_1}} \right) \!-\! \ln \left( {{a_n}} \right)\! + \!\ln \left( {{f_3}} \right) \!+ \!\ln \left( {{f_2}} \right) \!- \!2\ln \left( {{N_0}} \right)} \right],\;\text{if}\;\ln \left( {{f_1}} \right) \!- \!\ln \left( {{l_2}} \right)\! > \!\ln \left( {{f_2}} \right) \!-\! \ln \left( {{N_0}} \right)}
\end{array}} \right.\!.
\end{equation}
If $\ln \left( {{f_1}} \right) - \ln \left( {{l_2}} \right) \leqslant \ln \left( {{f_2}} \right) - \ln \left( {{N_0}} \right)$, the first-order derivative of $R_{sum}$ with respect of $a_n$ is expressed as
\begin{equation}\label{der1}
\frac{{\partial {R_{sum}}}}{{\partial {a_n}}} = \frac{{{\rho _{S{D_n}}}{P_{com}}}}{{{a_n}{\rho _{S{D_n}}}{P_{com}} + {N_0}}} - \frac{1}{{{a_n}}}.
\end{equation}
Obviously, ${{\partial {R_{sum}}} \mathord{\left/
 {\vphantom {{\partial {R_{sum}}} {\partial {a_n}}}} \right.
 \kern-\nulldelimiterspace} {\partial {a_n}}} \leqslant 0$ is satisfied.

If $\ln \left( {{f_1}} \right) - \ln \left( {{l_2}} \right) > \ln \left( {{f_2}} \right) - \ln \left( {{N_0}} \right)$, then ${\rho _{SR}}{N_0} > {\rho _{R{D_n}}}{P_{sen}}\left( {{\rho _{RR}}\delta  + {\rho _{LI}}\omega } \right)$, and the first-order derivative of $R_{sum}$ with respect of $a_n$ is expressed as
\begin{equation}\label{der2}
  \frac{{\partial {R_{sum}}}}{{\partial {a_n}}} =  - \frac{{{\rho _{SR}}}}{{{a_n}{\rho _{SR}} + {\rho _{RR}}\delta  + {\rho _{LI}}\omega }} - \frac{1}{{{a_n}}} \hfill + \frac{{{\rho _{S{D_n}}}{P_{com}}}}{{{a_n}{\rho _{S{D_n}}}{P_{com}} + {N_0}}} + \frac{{{\rho _{R{D_n}}}{P_{sen}}}}{{{a_n}{\rho _{R{D_n}}}{P_{sen}} + {N_0}}}.
\end{equation}
Thus, ${{\partial {R_{sum}}} \mathord{\left/
 {\vphantom {{\partial {R_{sum}}} {\partial {a_n}}}} \right.
 \kern-\nulldelimiterspace} {\partial {a_n}}} \leqslant 0$ is tenable for the following reasons
\begin{align}\label{r1}
\frac{{{\rho _{S{D_n}}}{P_{com}}}}{{{a_n}{\rho _{S{D_n}}}{P_{com}} + {N_0}}} - \frac{1}{{{a_n}}} &\leqslant 0,\\\label{r2}
\frac{{{\rho _{R{D_n}}}{P_{sen}}}}{{{a_n}{\rho _{R{D_n}}}{P_{sen}} + {N_0}}} - \frac{{{\rho _{SR}}}}{{{a_n}{\rho _{SR}} + {\rho _{RR}}\delta  + {\rho _{LI}}\omega }} &\leqslant 0.
\end{align}

Therefore, it can be concluded that $R_{sum}$ is a monotonously decreasing function of $a_n$, and according to (55a) the OPA coefficient $a_n^*$ is given by
\begin{equation}\label{an2}
a_n^* = \min \left( {{C_2},0.5 - \sigma } \right),
\end{equation}
where $\sigma > 0$ and $\sigma \to 0$.

\begin{remark}
Similar to the SCD scheme, by using the alternate optimization algorithm, the OPA factors can be obtained. The computational complexity of the algorithm is $O\left( {\left( {{K_3} + 1} \right){K_4}} \right)$, where $K_3$ denotes the number of iterations to find the optimal Lagrangian, $K_4$ is the iteration number of the main loop. In addition, the CCD scheme is designed to improve the communication performance while ensuring sensing requirement, which is suitable for sparsely populated areas.
\end{remark}

\section{Numerical Results}

In this section, numerical results are provided to verify the theoretical analysis presented in Section III-Section VI by using Monte-Carlo computer simulation. Hereinafter, unless otherwise mentioned, the simulation parameters are considered as follows: $a_f=0.7$, $a_n=0.3$, ${\gamma _{{\rm{th}}_f}}=1$, ${\gamma _{{\rm{th}}_n}}=2$, $\delta=0.2$, $N_0=1$, $\rho_{LI}=-25$ dB, $\Omega =5$, $\alpha=4$, $d_{SR}= 10$ m, $d_{RD_f}=20$ m, $d_{RD_n}=15$ m, $d_{SD_n}=20$ m, $d_{SD_f}=25$ m, $d_{RT}=d_{TR}=12$ m.

\begin{figure} [!t]
\centering
\vspace{-1.2 cm}  
\includegraphics[width=8cm]{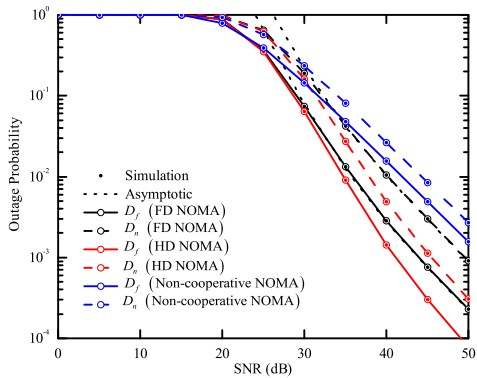}
\vspace{-0.8 cm}  
 \caption{Outage probabilities of  IoT devices versus the transmit SNR of $S$.}
\end{figure}
Fig. 2 presents the outage probability of $D_n$ and $D_f$ versus the transmit SNR of $S$. For comparison, the outage performance of the traditional non-cooperative NOMA and cooperative HD-NOMA schemes are presented as well.\footnote{It should be described that the total power is consistent across all the considered schemes to ensure a fair comparison.} It can be seen from Fig. 2 that the Monte Carlo simulation matches the analytical results perfectly and the convergence between the curves of asymptotic and analytical in the high SNR regime verify the accuracy of the derived results in (\ref{oprf1}), (\ref{oprdn1}), (\ref{asdf2}), and (\ref{asdn2}). From Fig. 2, we can also observe that on the premise of realizing the sensing function, the outage probabilities of the IoT devices for the proposed JCS NOMA scheme outperform the traditional non-cooperative NOMA with the same energy consumption, which is due to the fact that cooperative NOMA transmission can improve the frequency band utilization significantly. Furthermore, we notice that the outage performance of the proposed FD-NOMA scheme is worse compared with that of HD-NOMA, mainly due to the negative impact of LSI on the cooperative communication links. In addition, it can be obtained that the outage performance of $D_f$ is better than $D_n$, which is due to more power is allocated to $D_f$.

\begin{figure} [!t]
\centering
\includegraphics[width=8cm]{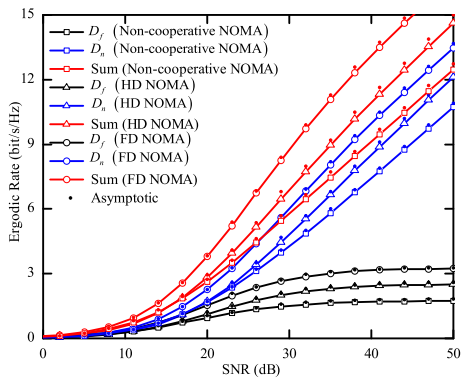}
\vspace{-0.8 cm}
 \caption{Ergodic rates of IoT devices versus the transmit SNR of $S$.}
 \vspace{-1 cm}  
\end{figure}
Fig. 3 demonstrates the ergodic rates of $D_n$ and $D_f$ versus the transmit SNR of $S$ in FD and HD mode. We reiterate that the total power consumption of three schemes is identical to ensure a fair comparison. It can be observed from Fig. 3 that the ergodic rate of $D_n$ keeps increasing, which is due to that the increases of the communication transmit power can improve the received SINR of the near IoT device. While for $D_f$, the ergodic rate tends to be standard eventually. Exactly, the ergodic rates of the far IoT device are numerically expressed as ${\log _2}\left( {1 + {{{a_f}} \mathord{\left/
 {\vphantom {{{a_f}} {{a_n}}}} \right.
 \kern-\nulldelimiterspace} {{a_n}}}} \right) + {\log _2}\left( {1 + {{{a_f}{\beta _{SR}}} \mathord{\left/
 {\vphantom {{{a_f}{\beta _{SR}}} {\left( {{a_n}{\beta _{SR}} + 2\beta _{RR}^2\delta  + {\beta _{LI}}\omega } \right)}}} \right.
 \kern-\nulldelimiterspace} {\left( {{a_n}{\beta _{SR}} + 2\beta _{RR}^2\delta  + {\beta _{LI}}\omega } \right)}}} \right)$ (FD-NOMA), ${\log _2}\left( {1 + {{{a_f}} \mathord{\left/
 {\vphantom {{{a_f}} {{a_n}}}} \right.
 \kern-\nulldelimiterspace} {{a_n}}}} \right) + 0.5{\log _2}\left( {1 + {{{a_f}{\beta _{SR}}} \mathord{\left/
 {\vphantom {{{a_f}{\beta _{SR}}} {\left( {{a_n}{\beta _{SR}} + 2\beta _{RR}^2\delta  + {\beta _{LI}}\omega } \right)}}} \right.
 \kern-\nulldelimiterspace} {\left( {{a_n}{\beta _{SR}} + 2\beta _{RR}^2\delta  + {\beta _{LI}}\omega } \right)}}} \right)$ (HD-NOMA), ${\log _2}\left( {1 + {{{a_f}} \mathord{\left/
 {\vphantom {{{a_f}} {{a_n}}}} \right.
 \kern-\nulldelimiterspace} {{a_n}}}} \right)$ (non-cooperative NOMA) in the high SNR regime. Moreover, it can be seen that the ergodic rate of the near IoT device plays a decisive role in the ergodic sum rate in the high SNR regime.

 \begin{figure} [!t]
\centering
\vspace{-0.8 cm}
\includegraphics[width=8cm]{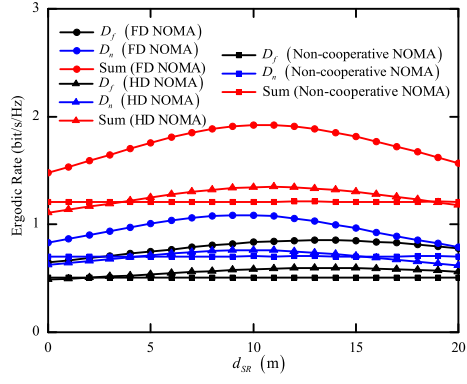}
\vspace{-0.7 cm}
\caption{Ergodic rates of IoT devices versus the relay location.}
\vspace{-1.0 cm}
\end{figure}
Fig. 4 illustrates ergodic rates of $D_n$ and $D_f$ versus the relay location in FD, HD, and non-cooperative schemes. Since the near IoT device makes a significant contribution to the ergodic sum rate, to better show the system performance, we assume that $S$, $R$, and $D_n$ are on a straight line, and $R$ is located between $S$ and $D_n$. We set $P_{com}=P_{sen}=20$ dB, $d_{SD_n}=5$ m, $d_{SD_f}=6$ m, the angle between $S\to D_f$ and $S\to D_n$ is set as $\phi=30^\circ$, thus the distance between $R$ and $D_f$ is ${d_{R{D_f}}} = \sqrt {d_{SR}^2 + d_{S{D_f}}^2 - 2{d_{SR}}{d_{S{D_f}}}\cos \phi } $. From Fig. 4, we can see that for the cooperative NOMA schemes, the ergodic sum rate of the IoT devices improves first and then decreases with the increase of $d_{SR}$. The reason for this phenomenon can be attributed to that the increase of $d_{SR}$ can enhance the channel gain between the relay and the IoT devices. However, with a shorter $d_{RD_n}$, the path loss between the MaBS and the relay increases, thus weakening the data transmission. Additionally, it can also be obtained that if the position of $R$ and $D_n$ is close to each other particularly, the ergodic rates of $D_n$ for the FD and HD NOMA are inferior to that of non-cooperative NOMA, since cooperative NOMA schemes need to share a portion of the total power for target sensing.

\begin{figure} [!t]
\centering
\includegraphics[width=8cm]{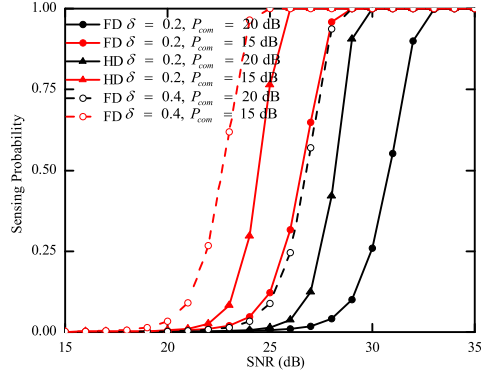}
\vspace{-0.8 cm}
\caption{Sensing probability versus the transmit SNR of the MiBS.}
\vspace{-0.5 cm}
\end{figure}
Fig. 5 shows the sensing probability of the MiBS versus the transmit SNR of $R$ in FD and HD modes with ${{\rm P}_{fa}} = {10^{ - 5}}$. Intuitively, the sensing probability can be improved by increasing the transmit SNR of $R$. We can further explore from Fig. 5 that for a fixed $P_{com}$, if $P_{sen} \leqslant P_{com}$, the sensing probability is approximately 0 and reducing $P_{com}$ can improve the sensing behavior, which is due to that the signal transmitted by the MaBS is regarded as interference in the considered system. Furthermore, due to the existence of LSI, the sensing performance of the FD NOMA scheme lags behind HD-NOMA with the same parameter settings, which is consistent with the result of (\ref{senp}). Therefore, in practical JCS system design, the operation mode of the relay should be considered reasonably according to the different requirements of sensing and communication.

\begin{figure} [!t]
\centering
\vspace{-0.5 cm}
\includegraphics[width=8cm]{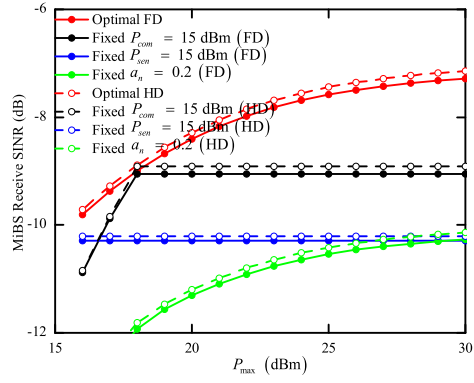}
\vspace{-0.8 cm}
\caption{Received SINR of $x_r$ versus the power budget with different power allocation schemes.}
\end{figure}
Fig. 6 plots the received SINR of $x_r$ at the MiBS versus the power budget $P_{\max}$ in both FD and HD modes. For comparison, we plot the curves of the proposed OPA scheme and other three random power allocation (RPA) baseline schemes, i.e., \emph{1) Jointly optimize $P_{sen}$ and $a_n$ with fixed $P_{com}$ = 15 dBm; \emph{2)} Jointly optimize $P_{com}$ and $a_n$ with fixed $P_{sen}$ = 15 dBm; \emph{3)} Jointly optimize $P_{com}$ and $P_{sen}$ with fixed $a_n$ = 0.2. For the fixed $P_{com}$, the received SINR of $x_r$ first increases with $P_{\max}$ and then with a diminishing return for an exceedingly large $P_{\max}$. This is because the received SINR of $x_r$ is an increasing function with respect to $P_{sen}$, and $P_{sen}$ can be increased by expanding the power budget $P_{\max}$, while the decoding performance of communication signals will be reduced on the contrary. When the minimal requirements of the communication signal decoding is reached, the received SINR of $x_r$ remains unchanged. For the fixed $P_{sen}$, the received SINR of $x_r$ is a monotonically decreasing function with respect to $P_{com}$, while increasing $P_{\max}$ cannot reduce $P_{com}$, thus the received SINR remains constant. For the fixed $a_n$, the received SINR of $x_r$ increases with $P_{\max}$, while due to the restraint of signal decoding requirements, the received SINR  lags behind the OPA scheme. In addition, it can be observed from Fig. 6 that due to the existence of LSI, the received SINR in FD mode is lower than that in HD mode.}

\begin{figure} [!t]
\centering
\includegraphics[width=8cm]{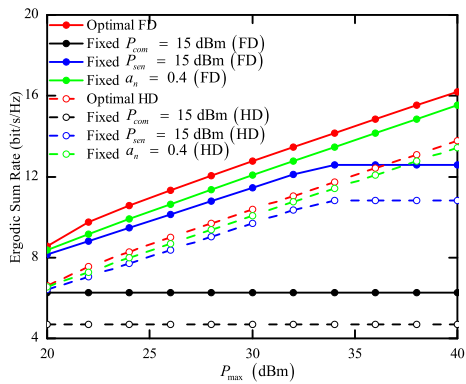}
\vspace{-0.8 cm}
\caption{Ergodic sum rate versus the power budget with different power allocation schemes.}
\vspace{-0.6 cm}
\end{figure}
Fig. 7 shows the advantages of the proposed OPA scheme over other RPA baseline schemes in terms of the ergodic sum rate in both FD and HD modes. As can be obtained from Fig. 7, the proposed OPA scheme is superior to the schemes of the fixed $P_{com}$, $P_{sen}$ or $a_n$. For the fixed $P_{com}$, due to the sensing constraint and the requirements of communication signals decoding, the ergodic sum rate remains constant. Besides, for a given $P_{sen}$, the ergodic sum rate increases with the increase of $P_{\max}$. This phenomenon continues until the communication reaches the peak power with the restraint of sensing demand. For the fixed $a_n$, restricted by the characteristics of the objective function and the decoding of communication signals, the ergodic sum rate is inferior to the proposed OPA scheme.
\vspace{-3 mm}
\section{Conclusion}
In this paper, we studied a JCS FD-NOMA system, in which the MiBS can realize two functions simultaneously, i.e., target sensing and cooperative relaying. We characterized the performance of communication and sensing. The exact and asymptotic outage probabilities, diversity orders, and approximate ergodic rates of the IoT devices were analyzed for communication, and the sensing probability of the MaBS was derived for sensing. To further consolidate the performance of the considered system, SCD, and CCD, two OPA problems were proposed. Then, the maximization problems of the sensing signal received SINR at the MiBS and IoT devices' ergodic sum rate were solved by using the Lagrange method and function monotonicity. The simulation results demonstrated that the outage performance of FD-NOMA system outperformed traditional non-cooperative in the context of constant total power, while it was inferior to HD-NOMA system. In terms of ergodic sum rate, the location of the MiBS exerted an influence on it and FD-NOMA revealed its preponderance. Furthermore, the sensing performance was limited by the transmit power of the MaBS, relay operation mode and reflection factor. Finally, we could confirm that the proposed OPA scheme is superior to other RPA schemes.

\numberwithin{equation}{section}
\section*{Appendix~A: Proof of Theorem 1} \label{Appendix:A}
\renewcommand{\theequation}{A.\arabic{equation}}
\setcounter{equation}{0}
Substituting (\ref{sinrsdf}) and (\ref{sinryrf}) into (\ref{oprdf}), $I_1$ can be rewritten as
\begin{align}\nonumber
  {I_1} &= \Pr \left( {\frac{{{a_f}{\rho _{SR}}{\gamma _c}}}{{{a_n}{\rho _{SR}}{\gamma _c} + {\rho _{RR}}\delta {\gamma _r} + {\rho _{LI}}\omega {\gamma _r} + 1}} < {\gamma _{{\rm{th}}_f}}} \right)  \Pr \left( {\frac{{{a_f}{\rho _{S{D_f}}}{\gamma _c}}}{{{a_n}{\rho _{S{D_f}}}{\gamma _c} + 1}} < {\gamma _{{\rm{th}}_f}}} \right) \\\nonumber
   &= \Pr \left( {{\rho _{SR}} < {\rho _{RR}}\delta {\gamma _r}{\theta _1} + {\rho _{LI}}\omega {\gamma _r}{\theta _1} + {\theta _1}} \right)\Pr \left( {{\rho _{S{D_f}}} < {\theta _1}} \right) \\\label{i1}
  & =\underbrace {\int_0^\infty  {\int_0^\infty  {\int_0^{{\theta _1}\left( {x\delta {\gamma _r} + y\omega {\gamma _r} + 1} \right)} {{f_{{\rho _{RR}}}}\left( x \right){f_{{\rho _{LI}}}}\left( y \right){f_{{\rho _{SR}}}}\left( z \right)dxdydz} } } }_{{I_{1,{\rm I}}}}  \underbrace {\int_0^{{\theta _1}} {{f_{{\rho _{S{D_f}}}}}\left( w \right)} dw}_{{I_{1,{\rm I}{\rm I}}}}.
\end{align}

By using PDFs and CDFs of ${{\rho _i}}$, ${I_{1,{\rm I}}}$ can be expressed as
\begin{equation}\label{i11}
  {I_{1,{\text{I}}}} = 1 - \frac{{{\beta _{SR}}}}{{2{\beta _{RR}}\left( {{\beta _{SR}} + {\beta _{LI}}\omega {\gamma _r}{\theta _1}} \right)}}{e^{ - \frac{{{\theta _1}}}{{{\beta _{SR}}}}}} \underbrace {\int_0^\infty  {\frac{1}{{\sqrt x }}{e^{ - \frac{{\sqrt x }}{{{\beta _{RR}}}} - \frac{{x\delta {\gamma _r}{\theta _1}}}{{{\beta _{SR}}}}}}} dx}_{I_{1,{\rm I}}^\Delta }.
\end{equation}

Denoting $t \triangleq \sqrt x $ and using \cite[Eq. (3.322.2)]{2007GradshteynI}, ${I_{1,{\rm I}}^\Delta }$ can be further expressed as
\begin{equation}\label{il3}
    I_{1,{\rm I}}^\Delta  = 2\int_0^\infty  {{e^{ - \frac{t}{{{\beta _{RR}}}} - \frac{{\delta {\gamma _r}{\theta _1}{t^2}}}{{{\beta _{SR}}}}}}} dt = \sqrt {\frac{{\pi {\beta _{SR}}}}{{\delta {\gamma _r}{\theta _1}}}} {e^{\frac{{{\beta _{SR}}}}{{4\beta _{RR}^2\delta {\gamma _r}{\theta _1}}}}}\left[ {1 - {\text{erfc}}\left( {\frac{1}{{2{\beta _{RR}}}}\sqrt {\frac{{{\beta _{SR}}}}{{\delta {\gamma _r}{\theta _1}}}} } \right)} \right].
\end{equation}

Substituting the PDF of $\rho_i$ into the second part of (\ref{i1}), after some straight forward mathematical manipulations, $I_1$ can be finally expressed as
\begin{align}\label{i14}
{I_1} \!=\! \left( \!{1 \!-\! {e^{ - \frac{{{\theta _1}}}{{{\beta _{S{D_f}}}}}}}} \!\right)\!\left\{\! {1 \!-\! \frac{{{\beta _{SR}}}}{{2{\beta _{RR}}\left( {{\beta _{SR}} \!+\! {\beta _{LI}}\omega {\gamma _r}{\theta _1}} \right)}}{e^{\frac{{{\beta _{SR}}}}{{4\beta _{RR}^2\delta {\gamma _r}{\theta _1}}} \!- \! \frac{{{\theta _1}}}{{{\beta _{SR}}}}}}\sqrt {\frac{{\pi {\beta _{SR}}}}{{\delta {\gamma _r}{\theta _1}}}} \left[ \!{1 \!-\! {\text{erfc}}\!\left(\! {\frac{1}{{2{\beta _{RR}}}}\sqrt {\frac{{{\beta _{SR}}}}{{\delta {\gamma _r}{\theta _1}}}} } \!\right)} \!\right]} \!\right\}.
\end{align}

Substituting (\ref{sinrsdf}), (\ref{sinryrf}), and (\ref{sinrff}) into (\ref{oprdf}), $I_2$ can be written as
\begin{equation}\label{i2}
  {I_2} = \Pr \left( {{\rho _{SR}} > {\rho _{RR}}\delta {\gamma _r}{\theta _1} + {\rho _{LI}}\omega {\gamma _r}{\theta _1} + {\theta _1}} \right) \Pr \left( {{\rho _{S{D_f}}} < {\theta _1}} \right)\Pr \left( {{\rho _{R{D_f}}} < {\varphi _1}} \right).
\end{equation}

Similar to $I_1$, after some mathematical manipulations, $I_2$ can be expressed as
\begin{align}\nonumber
{I_2}& = \left( {1 - {e^{ - \frac{{{\theta _1}}}{{{\beta _{S{D_f}}}}}}}} \right)\left( {1 - {e^{ - \frac{{{\varphi _1}}}{{{\beta _{R{D_f}}}}}}}} \right)\frac{{{\beta _{SR}}}}{{2{\beta _{RR}}\left( {{\beta _{SR}} + {\beta _{LI}}\omega {\gamma _r}{\theta _1}} \right)}} \hfill \\\label{i22}
   & \times {e^{\frac{{{\beta _{SR}}}}{{4\beta _{RR}^2\delta {\gamma _r}{\theta _1}}} - \frac{{{\theta _1}}}{{{\beta _{SR}}}}}}\sqrt {\frac{{\pi {\beta _{SR}}}}{{\delta {\gamma _r}{\theta _1}}}} \left[ {1 - {\text{erfc}}\left( {\frac{1}{{2{\beta _{RR}}}}\sqrt {\frac{{{\beta _{SR}}}}{{\delta {\gamma _r}{\theta _1}}}} } \right)} \right].
\end{align}

Substituting (\ref{i14}) and (\ref{i22}) into (\ref{oprdf}), after some straight-forward mathematical manipulations, (\ref{oprf1}) can be obtained.
\vspace{-4 mm}
\section*{Appendix~B: Proof of Corollary 1} 
\vspace{-2 mm}
\renewcommand{\theequation}{B.\arabic{equation}}
\setcounter{equation}{0}
In the high SNR regime, i.e., ${\gamma _c},{\gamma _r} \to \infty $, we obtain
\vspace{-1 mm}
\begin{align}\label{theta1}
\mathop {\lim }\limits_{{\gamma _c} \to \infty } {\theta _1} &= \mathop {\lim }\limits_{{\gamma _c} \to \infty } \frac{{{\gamma _{{\rm{th}}_f}}}}{{{a_f}{\gamma _c} - {a_n}{\gamma _c}{\gamma _{{\rm{th}}_f}}}} \approx 0,\\\label{phi1}
\mathop {\lim }\limits_{{\gamma _r} \to \infty } {\varphi _1} &= \mathop {\lim }\limits_{{\gamma _r} \to \infty } \frac{{{\gamma _{{\rm{th}}_f}}}}{{{a_f}{\gamma _r} - {a_n}{\gamma _r}{\gamma _{{\rm{th}}_f}}}} \approx 0,\\\label{thetax}
\mathop {\lim }\limits_{{\gamma _c},{\gamma _r} \to \infty } {\gamma _r}{\theta _1} &= \mathop {\lim }\limits_{{\gamma _r} \to \infty } \frac{{{\gamma _r}{\gamma _{{\rm{th}}_f}}}}{{{a_f}{\gamma _c} - {a_n}{\gamma _c}{\gamma _{{\rm{th}}_f}}}} \approx \theta _1^*.
\end{align}

Substituting (\ref{thetax}) into (\ref{i1}), $I_{1,{\rm I}}$ in high SNR regime can be expressed as
\begin{equation}
\label{ild}
I_{1,{\rm I}}^\infty  \!=\!\! \int_0^\infty \!\! {\int_0^\infty \!\! {\int_0^{x\delta \theta _1^* + y\omega \theta _1^*} \!\!{{f_{{\rho _{RR}}}}\left( x \right){f_{{\rho _{LI}}}}\left( y \right){f_{{\rho _{SR}}}}\left( z \right)dxdydz}}}.
\end{equation}
Similar to (\ref{i11}), after some mathematical manipulations, we can obtain the expression of $I_{1,{\rm I}}^\infty$.

When $x \to 0$, by using the approximation $1 - {e^x} \approx x$ and substituting (\ref{theta1}) into (\ref{i1}), $I_{1,{\rm {II}}}$ in high SNR regime is expressed as $I_{1,{\rm I}{\rm I}}^\infty  = {{{\theta _1}} \mathord{\left/
 {\vphantom {{{\theta _1}} {{\beta _{S{D_f}}}}}} \right.
 \kern-\nulldelimiterspace} {{\beta _{S{D_f}}}}}$.

Likewise, we can obtain $I_2^{\infty}$ by using the same method. Substituting $I_1^{\infty}$ and $I_2^{\infty}$ into (\ref{oprdf}), (\ref{asdf2}) can be obtained.
\section*{Appendix~C: Proof of Theorem 3} 
\renewcommand{\theequation}{C.\arabic{equation}}
\setcounter{equation}{0}
Defining ${W_1} \triangleq \min \left( {\gamma _{SR}^{{D_f}},\gamma _{R{D_f}}^{{D_f}},\gamma _{R{D_n}}^{{D_f}}} \right)$ and ${W_2}\triangleq \min \left( {\gamma _{S{D_f}}^{{D_f}},\gamma _{S{D_n}}^{{D_f}}} \right)$, then the achievable rate of $D_f$ can be rewritten as
\begin{align}
\label{B1}
{R_f} = \underbrace {{{\log }_2}\left( {1 + {W_1}} \right)}_{R_f^1} + \underbrace {{{\log }_2}\left( {1 + {W_2}} \right)}_{R_f^2}.
\end{align}

Substituting (\ref{sinryrf}), (\ref{sinrff}), and (\ref{sinrdnf}) into $W_1$, with the aid of probability theory, the CDF of $W_1$ can be expressed as
\vspace{-1 mm}
\begin{align}\nonumber
  {F_{{W_1}}}\left( {{w_1}} \right) &= 1 - \Pr \left( {{\rho _{SR}} \geqslant \frac{{\delta {\gamma _r}{\rho _{RR}}}}{{{g_c}\left( {{w_1}} \right)}} + \frac{{\omega {\gamma _r}{\rho _{LI}}}}{{{g_c}\left( {{w_1}} \right)}} + \frac{1}{{{g_c}\left( {{w_1}} \right)}},{\rho _{R{D_f}}} \geqslant {g_r}\left( {{w_1}} \right),{\rho _{R{D_n}}} \geqslant {g_r}\left( {{w_1}} \right)} \right) \hfill \\\nonumber
   &= 1 \!-\! \frac{{{\beta _{SR}}{g_c}\left( {{w_1}} \right)}}{{2{\beta _{RR}}\left( {{\beta _{SR}}{g_c}\left( {{w_1}} \right) \!+\! {\beta _{LI}}\omega {\gamma _r}} \right)}}\sqrt {\frac{{\pi {\beta _{SR}}{g_c}\left( {{w_1}} \right)}}{{\delta {\gamma _r}}}} {e^{\frac{{{\beta _{SR}}{g_c}\left( {{w_1}} \right)}}{{4\beta _{RR}^2\delta {\gamma _r}}} - \frac{1}{{{\beta _{SR}}{g_c}\left( {{w_1}} \right)}} - \frac{1}{{{\beta _{R{D_f}}}{g_r}\left( {{w_1}} \right)}} - \frac{1}{{{\beta _{R{D_n}}}{g_r}\left( {{w_1}} \right)}}}} \hfill \\\label{B2}
   &\times \left[ {1 - {\text{erfc}}\left( {\frac{1}{{2{\beta _{RR}}}}\sqrt {\frac{{{g_c}\left( {{w_1}} \right){\beta _{SR}}}}{{\delta {\gamma _r}}}} } \right)} \right].
\end{align}

Next, we can further observe that
\begin{align}\nonumber
  R_{\rm{ave}}^{f,1} &= \int_0^\infty  {{{\log }_2}\left( {1 + {w_1}} \right){f_{{W_1}}}\left( {{w_1}} \right)} d{w_1}\\\label{ef1}
   & = \frac{1}{{\ln 2}}\int_0^\infty  {\frac{{1 - {F_{{W_1}}}\left( {{w_1}} \right)}}{{1 + {w_1}}}} d{w_1}.
\end{align}

Substituting (\ref{B2}) into (\ref{ef1}), we can obtain the expression of $R_{\rm{ave}}^{f,1}$. The same procedure can be adopted for the proof of $R_{\rm{ave}}^{f,2}$. Then, we can obtain (\ref{rate2f}) after some mathematical manipulations.

Likewise, we can derive $R_{\rm{ave}}^{n,1}$ allowing the same approach as in (\ref{B2}).

For $R_{\rm{ave}}^{n,2}$, denoting $U \triangleq \gamma _{S{D_n}}^{{D_n}}$ first, the CDF of $U$ can be written as
\begin{equation}\label{cdfu}
  {F_U}\left( u \right) = \Pr \left( {{a_n}{\rho _{S{D_n}}}{\gamma _c} \leqslant u} \right)= 1 - {e^{ - \frac{u}{{{\beta _{S{D_n}}}{a_n}{\gamma _c}}}}}.
\end{equation}

Similar to (\ref{ef1}), we can further obtain
\begin{align}\nonumber
  R_{\rm{ave}}^{n,2}& = \frac{1}{{\ln 2}}\int_0^\infty  {\frac{{1 - {F_{{W_4}}}\left( u \right)}}{{1 + u}}} du \\\label{avn2}
    & = \frac{1}{{\ln 2}}\int_0^\infty  {\frac{1}{{1 + u}}{e^{ - \frac{u}{{{\beta _{S{D_n}}}{a_n}{\gamma _c}}}}}du}.
\end{align}

With the aid of \cite[Eq. (3.352.4)]{2007GradshteynI}, the exact expression of $R_{\rm{ave}}^{n,2}$ can be expressed as
\begin{equation}\label{raven2}
R_{\rm{ave}}^{n,2} =  - \frac{1}{{\ln 2}}{e^{\frac{1}{{{\beta _{S{D_n}}}{a_n}{\gamma _c}}}}}{\text{Ei}}\left( { - \frac{1}{{{\beta _{S{D_n}}}{a_n}{\gamma _c}}}} \right).
\end{equation}

After some straight forward mathematical manipulations, (\ref{rate2n}) can be obtained.
\section*{Appendix~D: Proof of Corollary 3} 
\renewcommand{\theequation}{D.\arabic{equation}}
\setcounter{equation}{0}
Substituting (\ref{sinryrf}), (\ref{sinrff}), and (\ref{sinrdnf}) into (\ref{r2f}), and using the inequality \cite{lix20}
\begin{equation}\label{ineq}
\mathbb{E}\left[ {{{\log }_2}\left( {1 + \frac{x}{y}} \right)} \right] \approx {\log _2}\left( {1 + \frac{{\mathbb{E}\left[ x \right]}}{{\mathbb{E}\left[ y \right]}}} \right),
\end{equation}
$R_{\rm{ave}}^{f,1}$ can be approximated as
\vspace{-1 mm}
\begin{align}\nonumber
  R_{\rm{ave}}^{ap,f1} &\approx \frac{1}{{\ln 2}}\ln \left[ {1 + \min \left( {\frac{{{a_f}\mathbb{E}\left[ {{\rho _{R{D_f}}}} \right]{\gamma _r}}}{{{a_n}\mathbb{E}\left[ {{\rho _{R{D_f}}}} \right]{\gamma _r} + 1}},\frac{{{a_f}\mathbb{E}\left[ {{\rho _{R{D_n}}}} \right]{\gamma _r}}}{{{a_n}\mathbb{E}\left[ {{\rho _{R{D_n}}}} \right]{\gamma _r} + 1}},} \right.} \right. \hfill \\\label{asavf1}
  &\;\;\;\;\;\;\;\;\;\;\;\;\;\;\;\;\;\;\;\;\;\;\;\;\;\left. {\left. {\frac{{{a_f}\mathbb{E}\left[ {{\rho _{SR}}} \right]{\gamma _c}}}{{{a_n}\mathbb{E}\left[ {{\rho _{SR}}} \right]{\gamma _c} + \mathbb{E}\left[ {{\rho _{RR}}} \right]\delta {\gamma _r} + \mathbb{E}\left[ {{\rho _{LI}}} \right]\omega {\gamma _r} + 1}}} \right)} \right],
\end{align}
where $\mathbb{E}\left[{\rho_{SR}}\right]$ is expressed as
\begin{equation}\label{esr}
  \mathbb{E}\left[ {{\rho _{SR}}} \right]= \int_0^\infty  {x{f_{{\rho _{SR}}}}\left( x \right)} dx  = {\beta _{SR}}.
\end{equation}

Then, we can obtain $\mathbb{E}\left[{\rho_{RD_f}}\right]$, $\mathbb{E}\left[{\rho_{RD_n}}\right]$, $\mathbb{E}\left[{\rho_{RR}}\right]$, and $\mathbb{E}\left[{\rho_{LI}}\right]$ with the same method. $R_{\rm{ave}}^{ap,f1}$ can be ontained after some mathematical calculations.

Similarly, $R_{\rm{ave}}^{ap,f2}$, $R_{ave}^{ap,n1}$ and $R_{ave}^{ap,n2}$ can be derived. Then, the approximate ergodic sum rate of the IoT devices can be further obtained.


\end{document}